\documentclass{article}

\PassOptionsToPackage{dvipsnames}{xcolor}
\usepackage[preprint]{neurips_2026}

\usepackage[utf8]{inputenc}
\usepackage[T1]{fontenc}
\usepackage{hyperref}
\usepackage{url}
\usepackage{booktabs}
\usepackage{amsfonts}
\usepackage{amsmath}
\usepackage{amssymb}
\usepackage{subcaption}
\usepackage{nicefrac}
\usepackage{caption}
\usepackage{microtype}
\usepackage{graphicx}
\usepackage{algorithm}
\usepackage{algorithmic}
\usepackage{tcolorbox}
\usepackage{enumitem}
\usepackage{multirow}
\usepackage{amsthm}
\usepackage{xcolor}
\usepackage{colortbl}
\definecolor{darkcerulean}{rgb}{0.03, 0.27, 0.49}
\definecolor{burgundy}{rgb}{0.7, 0.0, 0.13}
\definecolor{majorelleblue}{rgb}{0.38, 0.31, 0.86}
\definecolor{forestgreen}{rgb}{0.13, 0.55, 0.13}
\hypersetup{
    colorlinks=true, % Set to 'true' if you want colored links
    citecolor=darkcerulean,   % Define the color of citation links
    linkcolor=burgundy,    % Define the color of internal links
    urlcolor=majorelleblue     % Define the color of external hyperlinks
}

\DeclareMathOperator*{\argmax}{arg\,max}

\newcounter{noteHYTctr} 

\usepackage{titlesec}
\titlespacing*{\section}{0pt}{0.1\baselineskip}{0.1\baselineskip}
\titlespacing*{\subsection}{0pt}{0.1\baselineskip}{0.1\baselineskip}
\titlespacing*{\subsubsection}{0pt}{0.1\baselineskip}{0.1\baselineskip}
\titlespacing*{\paragraph}{0pt}{0.4\baselineskip}{0.5em}

\AtBeginDocument{%
  \setlength{\textfloatsep}{3pt plus 1pt minus 1pt}%
  \setlength{\intextsep}{3pt plus 1pt minus 1pt}%
  \setlength{\floatsep}{3pt plus 1pt minus 1pt}%
  \setlength{\dbltextfloatsep}{3pt plus 1pt minus 1pt}%
  \setlength{\dblfloatsep}{3pt plus 1pt minus 1pt}%
  \setlength{\abovecaptionskip}{2pt}%
  \setlength{\belowcaptionskip}{0pt}%
}

\newtheorem{assumption}{Assumption}
\newtheorem{theorem}{Theorem}
\newtheorem{lemma}{Lemma}
\newtheorem{proposition}{Proposition}
\newtheorem{corollary}{Corollary}

\title{LARGER: Lexically Anchored Repository Graph Exploration and Retrieval}

\author{
Yuntong Hu$^{1,*}$ \quad Tongli Su$^{1,*}$ \quad Liang Zhao$^{1,2,\dagger}$ \quad Bowen Zhu$^{2}$ \quad Hasibul Haque$^{2}$ \\
\\
$^{1}$Emory University \quad $^{2}$CausalDynamics.com
}

\begin{document}
\maketitle
\renewcommand{\thefootnote}{}
\footnotetext{\hspace{-1.8em}$^{*}$Equal contribution. $^{\dagger}$Corresponding author: \texttt{liang.zhao@emory.edu}.}
\vspace{-15pt}
\begin{abstract}
Repository-level coding agents must first \emph{localize} the files and symbols relevant to a task; failures at this stage can cascade across downstream objectives ranging from patch generation to test writing and codebase question answering. Existing agents navigate repositories primarily through lexical search, often missing structural relations such as imports, call chains, type hierarchies, and code--test links. Graph-based retrieval can recover such dependencies, but existing approaches often require separate graph tools or traversal stages that fragment the agent's interaction loop. We formalize repository context localization as \emph{Lexically Anchored Structural Localization}, where success depends on turning lexical matches into high-precision structural entry points and expose the most useful confidence-filtered local neighborhoods within the agent's existing search loop. We introduce \textbf{LARGER} (\textbf{Lexically Anchored Repository Graph Exploration and Retrieval}), a lexically anchored active-set retrieval framework that starts from lexical matches, aligns them to graph anchors, and performs confidence-filtered local expansion within the agent's existing search loop. LARGER integrates directly into existing CLI coding agents without requiring external graph databases or specialized graph interfaces. Across four benchmarks spanning localization, test generation, and codebase understanding, LARGER improves file-level Acc@5 on LocBench by +13.9 points with tuned hyperparameters and still gains +11.8 points with fixed hyperparameters over the strongest baseline, while delivering consistent gains on MuLocBench, SWE-Atlas Test Writing, and SWE-Atlas Codebase QA.
\end{abstract}

\section{Introduction}

LLM-powered coding agents can now fix bugs, implement features, and resolve complex GitHub issues across large repositories~\citep{yang2024sweagent, xia2024agentless, wang2024openhands}. Yet the success depends on successful \emph{repository context localization}: given a natural-language issue, identifying the small set of files and symbols that must be inspected and potentially modified. When localization fails, downstream objectives can be greatly undermined. As repositories grow to thousands of files and hundreds of thousands of lines, effective localization becomes both more critical and more difficult. This challenge stems not only from repository size but from structural complexity: code repositories are not flat text corpora, but sparse, typed relational systems whose relevant evidence is often distributed across imports, call chains, inheritance hierarchies, test fixtures, and documentation.

Existing repository context localization systems rely on two retrieval paradigms. \textbf{Lexical retrieval} underlies modern CLI coding agents such as Claude Code and Codex: keyword-driven tools such as \texttt{grep} preserve fast adaptive exploration, but operate on a flat repository view and often miss structurally relevant but lexically mismatched evidence. \textbf{Structured retrieval} methods incorporate repository structure through graph traversal or external memory~\citep{chen2025locagent, wang2025improving}, improving recall by exploiting dependencies across artifacts, but often introduce substantial overhead and fragment the agent's interaction loop across separate graph tools, databases, or traversal stages.

Our goal is to combine the best of both paradigms: the adaptive, low-friction exploration of lexical retrieval and the dependency-aware recall of structured retrieval, without inheriting the flat-view failure mode of the former or the heavy, fragmented retrieval loop of the latter. We therefore formalize repository context localization as \emph{Lexically Anchored Structural Localization} and propose \textbf{Lexically Anchored Repository Graph Exploration and Retrieval (LARGER)}. In LARGER, the agent's own lexical search queries anchor into the repository graph, and confidence-scored local expansion surfaces structurally related evidence within the same search output, so graph evidence is delivered \emph{within} lexical observations rather than through a separate graph tool, database, or traversal loop. Across four benchmarks (LocBench~\citep{chen2025locagent}, MuLocBench~\citep{zhang2025mulocbench}, SWE-Atlas Test Writing, and SWE-Atlas Codebase QA~\citep{scale2026sweatlas}), LARGER improves over strong baselines including BM25~\citep{robertson2009probabilistic}, Agentless~\citep{xia2024agentless}, LocAgent~\citep{chen2025locagent}, SWE-agent~\citep{yang2024sweagent}, OpenHands~\citep{wang2024openhands}, Claude Code, and Codex by up to $+13.9$ points on Acc@5 and $+12.2$ points on Recall@5, and surpasses strong proprietary baselines on SWE-Atlas Codebase QnA and Test Writing (\hyperref[sec:experiments]{Section~\ref*{sec:experiments}}).

The main contributions of this paper are as follows:
\begin{enumerate}[leftmargin=*, itemsep=2pt]
    \item \textbf{Formulation.} We formalize repository context localization as \emph{Lexically Anchored Structural Localization}, where a fixed agent policy localizes by turning lexical observations into structural anchors and progressively exposing compact local neighborhoods within the agent's existing search loop, without introducing new graph-specific actions (\hyperref[sec:formulation]{Section~\ref*{sec:formulation}}).
    \item \textbf{Method.} We decompose this joint objective into two tractable subproblems, \emph{graph quality} and \emph{retrieval efficiency}, and propose a divide-and-conquer approximation that reduces global subgraph search to lexical anchoring plus confidence-scored local expansion (\hyperref[sec:LARGER]{Section~\ref*{sec:LARGER}}).
    \item \textbf{System.} We present a practical architecture realizing both objectives: graph quality through multi-language AST-based construction, edge weighting and community detection; retrieval efficiency through lightweight sidecar storage, dynamic lexical propagation, and budget-aware filtering, all integrated into the CLI coding agent's existing search loop without additional tools (\hyperref[sec:LARGER]{Section~\ref*{sec:LARGER}}).
\end{enumerate}

\section{Related Work}

\paragraph{CLI Coding Agents.}
Modern CLI coding agents, spanning research systems such as SWE-agent~\citep{yang2024sweagent} and OpenHands~\citep{wang2024openhands} as well as widely deployed products including Claude Code, Codex, and OpenCode, have become a dominant interface for repository-level software engineering. They expose the LLM to the codebase through a small, fixed action set centered on shell-level tools, a minimalist design that pairs naturally with the identifier-centric nature of code and underpins strong performance on various coding tasks. However, this lexical-only interface leaves structural dependencies invisible unless surfaced through additional reasoning steps.

\paragraph{Repository-Level Code Localization.}
Agentic systems such as SWE-agent~\citep{yang2024sweagent}, OpenHands~\citep{wang2024openhands}, and OrcaLoca~\citep{yu2025orcaloca} localize via iterative shell-level search, while non-agentic pipelines such as Agentless~\citep{xia2024agentless}, CodePlan~\citep{bairi2024codeplan}, and HiLoRM~\citep{zhang2025hierarchical} use static multi-stage retrieval; RepoNavigator~\citep{zhang2026reponavigator} and SweRank+~\citep{reddy2025swerankplus} further learn exploration and ranking policies. These methods remain predominantly lexical, motivating work that augments localization with explicit graph retrieval over the repository.

\paragraph{Graph Retrieval for Code.}
Graph retrieval~\citep{peng2024graphragsurvey, hu2024grag, he2024gretriever} is most effective for evidence that is sparse, multi-hop, and relationally typed~\citep{xiang2025whengraphs}, the regime of code repositories. Code-focused approaches~\citep{liu2026a2rag, yang2025graphsearch} differ in how graph evidence reaches the agent: \emph{static pipelines}~\citep{shi2024repograph, tao2025cgm} consume offline-built graphs but lack adaptive exploration; \emph{graph-tool interfaces}~\citep{liu2024codexgraph, shah2025ranger, vogel2026codebasememory} expose graph queries as new agent actions but fragment the search loop; \emph{graph-guided agents}~\citep{chen2025locagent, jiang2025cosil, li2025graphcodeagent, wang2025improving} steer exploration via graph structure but tightly couple traversal with agent control. LARGER instead delivers graph evidence through the agent's existing lexical channel, jointly addressing graph quality and retrieval efficiency by design without breaking the CLI agent regime.

\section{Problem Formulation}
\label{sec:formulation}

\paragraph{Repository Graph.}
\label{def:repo-graph}
A code repository is represented as a heterogeneous attributed graph
\begin{equation}
G = (V, E, \mathcal{X}),
\end{equation}
where the node set is a typed union
\begin{equation}
V = V^{\mathrm{dir}} \cup V^{\mathrm{file}} \cup V^{\mathrm{class}} \cup V^{\mathrm{func}},
\end{equation}
with $V^{\mathrm{dir}}$, $V^{\mathrm{file}}$, $V^{\mathrm{class}}$, and $V^{\mathrm{func}}$ denoting the sets of directory, file, class, and function nodes, respectively. Let $\mathcal{R}$ denote the finite set of relation types, and let $\mathcal{T}$ denote the space of text strings. The edge set $E \subseteq V \times \mathcal{R} \times V$ consists of typed directed edges $(u,r,v)$ with $u,v \in V$ and $r \in \mathcal{R}$, and the attribute map $\mathcal{X}: V \to \mathcal{T}$ assigns each node a textual attribute such as a file path, code content, or documentation snippet. At snapshot $c$, the repository graph is
\[
G_c = (V_c, E_c, \mathcal{X}_c),
\]
where $V_c \subseteq V$ is the set of nodes present at $c$, $E_c \subseteq E$ contains the edges induced by $V_c$, and $\mathcal{X}_c = \mathcal{X}|_{V_c}$.

The relation set $\mathcal{R}$ captures structural and cross-artifact dependencies within the repository, such as \texttt{contains} (hierarchical structure), \texttt{imports} (module dependencies), \texttt{invokes} (function calls), and cross-artifact links (e.g., source-to-test or source-to-documentation connections). 
This graph representation provides a general abstraction of repository structure, capturing dependencies across files, functions, and auxiliary artifacts such as tests and documentation. 
Such a representation is broadly useful for tasks requiring structured reasoning over codebases.

\paragraph{Coding Agent.}
A modern coding agent is modeled as a policy $\pi$ implemented by a pretrained language model that interacts with a repository through a set of actions $\mathcal{U}$ (e.g., search, file inspection) and receives observations from the environment. 

The agent operates iteratively, producing a sequence of actions and observations while updating its internal context.
For such agents, code localization is a fundamental prerequisite for downstream tasks: if the relevant files, classes, or functions are not identified, patch generation, test writing, and code understanding all degrade.

\paragraph{Code Localization.}
\label{def:localization}
Given a natural-language issue description $q$ on a repository represented by the graph $G$, code localization aims to retrieve the task-relevant target set of code entities $Y(q, G) \subseteq V$. Letting $U(\mathcal{S} \mid q, G)$ denote a utility function that captures both the relevance of $\mathcal{S}$ and its retrieval cost, we model the target as
\begin{equation}
\label{eq:objective}
Y(q, G) = \mathcal{S}^*(q, G) = \argmax_{\mathcal{S} \subseteq V} U(\mathcal{S} \mid q, G).
\end{equation}
At inference time the agent operates on a specific snapshot $G_c$, so each instance of the localization problem reduces to evaluating \hyperref[eq:objective]{Eq.~\ref*{eq:objective}} at $G\!=\!G_c$.

Existing approaches fall into two paradigms:
\begin{itemize}[left=0pt]
    \item \textbf{Lexical retrieval.}
    Modern CLI coding agents (e.g., Claude Code, Codex, OpenCode) rely on keyword-driven tools such as \texttt{grep}. 
    While efficient and well aligned with identifier-centric codebases, lexical retrieval operates on a flat repository view and often fails to recover structurally relevant but lexically mismatched evidence.
    
    \item \textbf{Structured retrieval.}
    Recent methods incorporate repository structure through graph traversal or external memory~\citep{chen2025locagent, wang2025improving}, improving recall by exploiting dependencies across artifacts. 
    However, these approaches typically rely on heavyweight graph access and global traversal strategies that introduce substantial overhead and disrupt the agent’s interaction loop.
\end{itemize}

Lexical methods are thus efficient but incomplete, while structured methods are expressive but inefficient. Effective code localization must exploit structural dependencies without violating the agent's stepwise interaction constraints.

These requirements lead to several key challenges:
\begin{itemize}[left=0pt]
    \item \textbf{Structural accessibility.} How to represent repository structure so that task-relevant dependencies are reachable through compact local neighborhoods.
    \item \textbf{Efficient exposure.} How to expose useful structural evidence without incurring the overhead of global graph traversal or excessive context growth.
    \item \textbf{Sequential adaptation.} How to adapt retrieval decisions as the agent’s context evolves during interaction.
\end{itemize}

We address these challenges by coupling lexical anchoring with localized graph expansion, enabling efficient exploration while progressively incorporating structural evidence.

\section{LARGER: Lexically Anchored Repository Graph Exploration and Retrieval}
\label{sec:LARGER}

LARGER solves the snapshot specialization \eqref{eq:global-subgraph} of the localization objective \eqref{eq:objective} through a two-level decomposition.
At a fixed repository snapshot \(G_c\), it approximates the intractable subgraph-selection problem by an active-set loop that sequentially expands from lexical anchors (\hyperref[sec:inner-loop]{\S\ref*{sec:inner-loop}}) and solves each step by confidence-filtered local retrieval (\hyperref[sec:per-step]{\S\ref*{sec:per-step}}).
Across snapshots, it maintains the evolving repository graph through commit-aware alignment and summarizes the full procedure in \hyperref[alg:larger]{Algorithm~\ref*{alg:larger}} (\hyperref[sec:outer-loop]{\S\ref*{sec:outer-loop}}).

\subsection{Active-Set Loop}
\label{sec:inner-loop}
Fix a commit $c$ and a query $q$. Specialized to $G_c$, the localization objective in \hyperref[eq:objective]{Eq.~\ref*{eq:objective}} reduces to a subgraph-selection problem,
\begin{equation}
\hat g^{*} \;=\; \argmax_{\hat g \subseteq G_c} U(\hat g \mid q, G_c).
\label{eq:global-subgraph}
\end{equation}
Direct optimization of \hyperref[eq:global-subgraph]{Eq.~\ref*{eq:global-subgraph}} is intractable since the candidate space has cardinality $2^{|V_c|}$, and in the agentic setting evidence is exposed sequentially under bounded context and tool budgets rather than chosen globally in one shot. We therefore decompose discovery over $T$ interaction steps and approximate $\hat g^{*}$ by the cumulative subgraph that the agent has exposed by step $T$:
\begin{equation}
\hat g_{1:T} \;=\; \hat g_{1:T-1} \cup \hat g_T \;=\; \bigcup_{t=1}^{T} \hat g_t, \qquad \hat g_{1:0} = \emptyset,
\label{eq:cumulative}
\end{equation}
where $\hat g_t$ is the graph exposed to the agent at step $t$, and $\hat g_{1:T}$ is the cumulative discovered subgraph after $T$ steps, taken as our approximation $\hat g\!\approx\!\hat g_{1:T}$. The inner loop's design reduces to specifying the per-step expansion rule that produces $\hat g_t$.

Directly solving \hyperref[eq:global-subgraph]{Eq.~\ref*{eq:global-subgraph}} requires searching over exponentially many subgraphs of \(G_c\), while the agent can only observe a bounded amount of context at each step.
This mismatch motivates us to develop an active-set procedure: at step \(t\), we maintain a
small working set \(M_t \subseteq V_c\) of currently plausible anchor nodes and solve
only the restricted local problem around them. The active set is induced by the
agent's lexical observation,
\begin{equation}
M_t = \operatorname{align}\!\left(\Omega_{\mathrm{lex}}(a_t)\right),
\qquad
a_t = \pi(q, \mathcal{C}_{t-1}),
\label{eq:active}
\end{equation}
where \(\mathcal{C}_{t-1}\) is the accumulated context before step \(t\). Thus, \(M_t\)
contains only repository nodes aligned to the agent's current search result, avoiding
blind traversal over the full graph.
Given \(M_t\), the step-\(t\) restricted solve exposes confidence-filtered local
neighborhoods around active anchors:
\begin{equation}
\hat g_t =
\bigcup_{v \in M_t}
G_c\!\left[\mathcal N^{*}_{K,t}(v)\right],
\qquad
\mathcal N^{*}_{K,t}(v) \subseteq \mathcal N_K(v),
\label{eq:anchor-subgraph}
\end{equation}
where $\mathcal{N}_K(v) = \{u \in V_c : d_{G_c}(u,v) \le K\}$ is the $K$-hop neighborhood of $v$ in $G_c$, $\mathcal{N}_{K,t}^{*}(v)$ is the subset of those neighbors retained at step $t$, and $G_c[S]$ denotes the subgraph of $G_c$ induced by $S$. Substituting \hyperref[eq:anchor-subgraph]{Eq.~\ref*{eq:anchor-subgraph}} into \hyperref[eq:global-subgraph]{Eqs.~\ref*{eq:global-subgraph}--\ref*{eq:cumulative}} reduces the global subgraph search to a sequence of local selections parameterized by the active sets $\{M_t\}_{t=1}^{T}$ and the retained neighbors $\{\mathcal{N}_{K,t}^{*}(v)\}_{v \in M_t}$. The
exposed anchors and graph evidence are then folded into the bounded context,
\begin{equation}
\mathcal{C}_t =
\Pi\!\left(\mathcal{C}_{t-1} \cup M_t \cup \Gamma(M_t,q,\mathcal{C}_{t-1})\right),
\label{eq:context}
\end{equation}
which conditions the next action \(a_{t+1}=\pi(q,\mathcal{C}_t)\) and hence reselects the next
active set \(M_{t+1}\). The procedure therefore follows the active-set
pattern: restrict the global problem to a small active region, solve the restricted
problem, update the working context, and reselect the active set for the next step. After \(T\) active-set updates, the agent produces the final localization
\(R_{c,T}=\psi(\mathcal{C}_T,q)\), where \(\psi\) selects from the bounded accumulated
context \(\mathcal{C}_T\) rather than the full repository graph.

\subsection{Per-Step Local Solver}
\label{sec:per-step}
Given the active set \(M_t\), we next specify how the restricted step in \hyperref[eq:anchor-subgraph]{Eq.~\ref*{eq:anchor-subgraph}}
is solved efficiently.
Within step $t$, the joint utility of the per-anchor neighborhoods admits an additive upper bound, with equality when the neighborhoods are pairwise disjoint:
\begin{equation}
U\!\Bigl(\textstyle\bigcup_{v \in M_t} G_c\!\bigl[\mathcal{N}_{K,t}^{*}(v)\bigr] \,\Big|\, q, G_c\Bigr)
\;\le\;
\sum_{v \in M_t}
U\!\bigl(G_c\!\bigl[\mathcal{N}_{K,t}^{*}(v)\bigr] \,\big|\, q, G_c\bigr),
\label{eq:additive}
\end{equation}
which holds whenever $U$ is monotone submodular in the exposed set, a standard property of recall- and coverage-based utilities. In real repositories, lexical anchors typically target distinct roles or modules, so their $K$-hop neighborhoods overlap weakly and \hyperref[eq:additive]{Eq.~\ref*{eq:additive}} is close to tight. LARGER therefore optimizes the additive surrogate, decomposing the step-$t$ problem into independent per-anchor subproblems,
\begin{equation}
\mathcal{N}_{K,t}^{*}(v) \;=\;
\argmax\nolimits_{\mathcal{N} \subseteq \mathcal{N}_K(v)}
U\!\bigl(G_c[\mathcal{N}] \mid q, G_c\bigr),
\qquad \forall\, v \in M_t.
\label{eq:local-opt}
\end{equation}
\hyperref[eq:local-opt]{Eq.~\ref*{eq:local-opt}} is small and tractable for fixed $M_t$. The remaining design questions are (i) how the active set $M_t$ is initialized and updated, and (ii) how the local utility in \hyperref[eq:local-opt]{Eq.~\ref*{eq:local-opt}} is approximated in practice, which are addressed as follows:

\paragraph{Lexical seed identification to address Question (i)}
At step $t$, given the query $q$ and the accumulated context $\mathcal{C}_{t-1}$ defined below, the agent issues a lexical search action. The action returns lexical matches $O_t^{\mathrm{lex}} = \Omega_{\mathrm{lex}}(a_t)$, which align to a node set
\begin{equation}
M_t \;=\; \mathrm{align}\!\bigl(O_t^{\mathrm{lex}}\bigr) \;\subseteq\; V_c.
\label{eq:anchor}
\end{equation}
$M_t$ is the active set at step $t$: high-precision entry points whose discovery cost is amortized into the agent's existing search action and incurs no additional tool call.

\paragraph{Graph-conditioned local retrieval to address Question (ii).}
For each anchor $v \in M_t$, we approximate the local utility in \hyperref[eq:local-opt]{Eq.~\ref*{eq:local-opt}} by an additive decomposition of $U$ over the exposed nodes. Adopting the convention $U(\emptyset \mid q, \mathcal{C}_{t-1}) = 0$ and writing $\mathrm{score}_t(u \mid v, q, \mathcal{C}_{t-1}) \;=\; U\!\bigl(G_c[\{u\}] \mid q, \mathcal{C}_{t-1}\bigr)$ for the singleton utility of exposing $u$ given the current context, the joint utility decomposes as
\begin{equation}
U\!\bigl(G_c[\mathcal{N}] \mid q, \mathcal{C}_{t-1}\bigr)
\;\approx\;
\sum\nolimits_{u \in \mathcal{N}} \mathrm{score}_t(u \mid v, q, \mathcal{C}_{t-1}).
\label{eq:scoring}
\end{equation}
The decomposition is exact when $U$ is additive in the exposed nodes (e.g., recall and coverage utilities, which sum hits over $Y$) and tight up to a low-overlap term when $U$ is monotone submodular. Substituting \hyperref[eq:scoring]{Eq.~\ref*{eq:scoring}} into the per-anchor problem of \hyperref[eq:local-opt]{Eq.~\ref*{eq:local-opt}}, together with a cardinality budget $|\mathcal{N}| \le k$ and a confidence filter $\omega(v,u) \ge \theta$, yields a closed-form selection:
\begin{equation}
\mathcal{N}_{K,t}^{*}(v) \;=\;
\argmax_{\substack{\mathcal{N}\,\subseteq\,\mathcal{N}_K^{\theta}(v) \\ |\mathcal{N}|\,\le\,k}}
\;\sum_{u \in \mathcal{N}} \mathrm{score}_t(u \mid v, q, \mathcal{C}_{t-1}),
\qquad
\mathcal{N}_K^{\theta}(v) \;=\; \bigl\{u \in \mathcal{N}_K(v) : \omega(v,u) \ge \theta\bigr\},
\label{eq:topk}
\end{equation}
where $\omega(v,u) \in [0,1]$ is the provenance-aware edge confidence (\hyperref[app:graph-impl]{Appendix~\ref*{app:graph-impl}}); the argmax solves to the $k$ highest-scoring elements of $\mathcal{N}_K^{\theta}(v)$. Aggregating across anchors yields the step's graph-augmented evidence,
\begin{equation}
\Gamma(M_t, q, \mathcal{C}_{t-1}) \;=\; \bigcup_{v \in M_t} \mathcal{N}_{K,t}^{*}(v).
\label{eq:expand-aggregate}
\end{equation}
\hyperref[eq:scoring]{Eqs.~\ref*{eq:scoring}--\ref*{eq:topk}} make the per-anchor problem in \hyperref[eq:local-opt]{Eq.~\ref*{eq:local-opt}} executable in $O(|\mathcal{N}_K(v)|)$ time, independently of $|V_c|$.

\subsection{Theoretical Analysis}
We now summarize the consequences of the active-set solver for recall, step count,
and token cost.
\begin{theorem}[Recall dominance]
\label{thm:recall-dominance}
Under a shared policy $\pi$ and identical query stream, let $\mathcal{C}_t^{\mathrm{lex}}$ denote the context produced with $\Gamma \equiv \emptyset$ and $\mathcal{C}_t^{\mathrm{LARGER}}$ the context produced by the full system. Then for every step $t$,
\[
\mathcal{C}_t^{\mathrm{lex}} \;\subseteq\; \mathcal{C}_t^{\mathrm{LARGER}},
\qquad \text{and hence} \qquad
\mathrm{Recall}\bigl(\mathcal{C}_t^{\mathrm{LARGER}}, Y\bigr) \;\geq\; \mathrm{Recall}\bigl(\mathcal{C}_t^{\mathrm{lex}}, Y\bigr).
\]
\end{theorem}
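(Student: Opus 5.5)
The plan is an induction on $t$ over the context recursion \eqref{eq:context}, applied to the two runs side by side. I will read ``shared policy $\pi$ and identical query stream'' as saying that both runs issue the same lexical action $a_t$ at every step. Then they receive the same lexical observation $O_t^{\mathrm{lex}}=\Omega_{\mathrm{lex}}(a_t)$, and hence the same active set $M_t=\mathrm{align}(O_t^{\mathrm{lex}})$ by \eqref{eq:anchor}. This reading matters. If $a_t$ were regenerated as $\pi(q,\mathcal{C}_{t-1})$ from the now different contexts, the two trajectories could diverge and no pathwise inclusion would hold. I will state this explicitly as the interpretation of the hypothesis.

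The base case is $\mathcal{C}_0^{\mathrm{lex}}=\mathcal{C}_0^{\mathrm{LARGER}}$, since both runs start from the same initial context. For the inductive step, assume $\mathcal{C}_{t-1}^{\mathrm{lex}}\subseteq\mathcal{C}_{t-1}^{\mathrm{LARGER}}$. The pre-projection sets then satisfy
\[
\mathcal{C}_{t-1}^{\mathrm{lex}}\cup M_t\cup\emptyset\;\subseteq\;\mathcal{C}_{t-1}^{\mathrm{LARGER}}\cup M_t\cup\Gamma(M_t,q,\mathcal{C}_{t-1}^{\mathrm{LARGER}}).
\]
This uses the fact that $\Gamma$ in \eqref{eq:expand-aggregate} is a union of subsets of $V_c$ and is therefore a superset of $\emptyset$. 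Applying $\Pi$ to both sides gives $\mathcal{C}_t^{\mathrm{lex}}\subseteq\mathcal{C}_t^{\mathrm{LARGER}}$, provided $\Pi$ is monotone with respect to inclusion.

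Recall then follows immediately. With $\mathrm{Recall}(\mathcal{C},Y)=|\mathcal{C}\cap Y|/|Y|$, the numerator is nondecreasing under set inclusion.

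The main obstacle is the monotonicity of the bounded-context projection $\Pi$. A genuinely budgeted $\Pi$, for example truncation to a fixed token budget, is not monotone in general, because adding graph evidence could evict lexical items. My first approach is to make explicit the design property that LARGER appends graph evidence within the search output without displacing lexical matches. Formally, I would assume $\Pi$ retains $\mathcal{C}_{t-1}\cup M_t$ and trims only the $\Gamma$ part, using the budget $k$ and threshold $\theta$ of \eqref{eq:topk}. Under this assumption $\Pi(A\cup B)\supseteq\Pi(A)$ whenever $B$ consists only of graph evidence, which is exactly what the induction step needs.

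If a weaker hypothesis is preferred, the fallback is to assume $\Pi$ is an inclusion-monotone operator, for example a filter that acts elementwise. In either case I would note that the result is an inclusion at the level of the exposed node sets. It does not automatically transfer to the final output $\psi(\mathcal{C}_T,q)$ unless $\psi$ is also monotone.
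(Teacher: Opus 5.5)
Your proposal is correct and matches the paper's proof: induction on $t$ with the same active set $M_t$ in both runs, containment of the pre-projection arguments, monotonicity of $\Pi$, and monotonicity of recall under inclusion. The paper adopts your fallback hypothesis (an inclusion-monotone $\Pi$, stated as a standing assumption at the start of the appendix) and your fixed-action reading of the query stream, which it acknowledges in its remark on the fixed-stream assumption.
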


A finer quantitative version characterizes the recall gap exactly.

\begin{theorem}[Quantitative discovery advantage]
\label{thm:quant-discovery}
For any step $T$, let $H_T = \bigl(\mathcal{C}_T^{\mathrm{LARGER}} \setminus \mathcal{C}_T^{\mathrm{lex}}\bigr) \cap Y$ denote the ground-truth nodes that LARGER has surfaced through graph expansion but the non-graph regime has not. Then
\[
\mathrm{Recall}\bigl(\mathcal{C}_T^{\mathrm{LARGER}}, Y\bigr) - \mathrm{Recall}\bigl(\mathcal{C}_T^{\mathrm{lex}}, Y\bigr) \;=\; |H_T|/|Y|.
\]
Whenever some relevant node is reachable from a realized lexical anchor within $K$ hops along edges of confidence $\geq \theta$ but does not itself appear as a lexical match, $H_T$ is non-empty and the gap is strict; the non-graph regime then requires additional steps to close it. The full structural condition is formalized as \hyperref[ass:reachability]{Assumption~\ref*{ass:reachability}} in \hyperref[app:theory]{Appendix~\ref*{app:theory}}.
\end{theorem}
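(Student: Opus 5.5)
The plan is to reduce the recall identity to a set-counting statement, relying on the containment of Theorem~\ref{thm:recall-dominance}. Define $\mathrm{Recall}(\mathcal{C},Y)=|\mathcal{C}\cap Y|/|Y|$, where $\mathcal{C}$ is read as the set of repository nodes it contains. Theorem~\ref{thm:recall-dominance} gives $\mathcal{C}_T^{\mathrm{lex}}\subseteq\mathcal{C}_T^{\mathrm{LARGER}}$. Hence $\mathcal{C}_T^{\mathrm{LARGER}}\cap Y$ is the disjoint union of $\mathcal{C}_T^{\mathrm{lex}}\cap Y$ and $(\mathcal{C}_T^{\mathrm{LARGER}}\setminus\mathcal{C}_T^{\mathrm{lex}})\cap Y=H_T$. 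Taking cardinalities and dividing by $|Y|$ yields the stated identity. This first step is routine once the containment is available.

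For the strictness claim, the plan is to exhibit an element of $H_T$. Suppose $y\in Y$ lies in $\mathcal{N}_K^{\theta}(v)$ for some realized anchor $v\in M_t$ with $t\le T$, where $M_t$ is computed along the LARGER trajectory. Suppose also that $y$ never appears as a lexical match, so it never enters $\mathcal{C}^{\mathrm{lex}}$. I would then invoke Assumption~\ref{ass:reachability} to guarantee two further facts. First, $y$ lies among the top-$k$ elements selected by Eq.~\eqref{eq:topk}, for example because $|\mathcal{N}_K^{\theta}(v)|\le k$ or because relevant nodes score highest. Second, the projection $\Pi$ in Eq.~\eqref{eq:context} does not evict $y$ before step $T$. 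Given these facts, $y\in\Gamma(M_t,q,\mathcal{C}_{t-1})\subseteq\mathcal{C}_t^{\mathrm{LARGER}}$, and $y$ persists to $\mathcal{C}_T^{\mathrm{LARGER}}$. Since $y\notin\mathcal{C}_T^{\mathrm{lex}}$, it follows that $y\in H_T$, so $|H_T|\ge1$ and the gap is strict. The remark about additional steps then follows from the identity: the lexical regime can match LARGER's recall only by issuing further actions whose lexical observations contain every element of $H_T$.

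The main obstacle is that the two regimes run the same policy on different contexts. Their action sequences can therefore diverge, which threatens both the containment and the claim that $y\notin\mathcal{C}^{\mathrm{lex}}$. I would handle this exactly as Theorem~\ref{thm:recall-dominance} does, through the hypothesis of an \emph{identical query stream}. Under that hypothesis the actions $a_t$, and hence the sets $O_t^{\mathrm{lex}}$ and $M_t$, coincide in the two runs. One can then prove by induction on $t$ that LARGER's context equals the lexical context together with the accumulated $\Gamma$ terms. The induction uses the monotonicity of $\Pi$ (that is, $\Pi(A)\subseteq\Pi(B)$ whenever $A\subseteq B$), which I would take as part of the assumption. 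The non-eviction of $y$ is the delicate point. I would require it explicitly in Assumption~\ref{ass:reachability}, for instance by assuming that $\Pi$ retains surfaced nodes within budget.
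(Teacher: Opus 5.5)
Your proposal follows the paper's proof. The identity comes from the containment $\mathcal{C}_T^{\mathrm{lex}}\subseteq\mathcal{C}_T^{\mathrm{LARGER}}$ of Theorem~\ref{thm:recall-dominance} plus disjoint-union counting, and strictness comes from a reachable but lexically hidden $y\in Y$ that enters $\mathcal{C}_T^{\mathrm{LARGER}}$ through $\Gamma$ while staying outside $\mathcal{C}_T^{\mathrm{lex}}$. The paper's proof simply asserts that such a $y$ enters via $\Gamma$, tacitly assuming the two facts you flag: that $y$ survives the top-$k$ cut of Eq.~\eqref{eq:topk} at some step $t\le T$, and that it is not evicted by $\Pi$. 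Making these explicit hypotheses sharpens the argument rather than departing from it.
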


\noindent The proofs of Theorems~\ref{thm:recall-dominance} and~\ref{thm:quant-discovery} are deferred to \hyperref[app:theory]{Appendix~\ref*{app:theory}}.

If the lexical matcher returns at most $m$ anchors per step and each rendered graph node contributes at most $L_{\mathrm{node}}^{\max}$ prompt tokens, then $|\Gamma(M_t, q, \mathcal{C}_{t-1})| \leq mk$ and $\Gamma$ injects at most $\Delta = m\,k\,L_{\mathrm{node}}^{\max}$ tokens per step beyond lexical retrieval, independently of $|V_c|$ (\hyperref[lem:bounded-aug]{Lemma~\ref*{lem:bounded-aug}}, \hyperref[app:theory]{Appendix~\ref*{app:theory}}). Combined with recall dominance, this yields:

\begin{proposition}[Step-count dominance]
\label{prop:step-dominance}
Fix a policy $\pi$ and a query stream $\{q_t\}_{t \geq 1}$ shared by both regimes. For any recall threshold $R \in (0,1]$, define
\[
T^{(\cdot)}(R) \;=\; \min\bigl\{\,t \,:\, \mathrm{Recall}\bigl(\mathcal{C}_t^{(\cdot)}, Y\bigr) \geq R\,\bigr\}, \quad T^{(\cdot)}(R) = \infty \text{ if never reached.}
\]
Then $T^{\mathrm{LARGER}}(R) \;\leq\; T^{\mathrm{lex}}(R)$ for all $R \in (0,1]$.
\end{proposition}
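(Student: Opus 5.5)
The plan is to derive Proposition~\ref{prop:step-dominance} directly from Theorem~\ref{thm:recall-dominance}, using only the fact that the hitting time of a threshold is monotone with respect to a pointwise ordering of the recall trajectories. Fix $R \in (0,1]$.

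The first step is to record the pointwise ordering. Under the shared policy $\pi$ and the shared query stream $\{q_t\}$, Theorem~\ref{thm:recall-dominance} gives $\mathcal{C}_t^{\mathrm{lex}} \subseteq \mathcal{C}_t^{\mathrm{LARGER}}$ for every $t$. Recall is computed as $|\mathcal{C} \cap Y|/|Y|$ with a fixed denominator, so it is monotone under inclusion, and we get
\[
\mathrm{Recall}\bigl(\mathcal{C}_t^{\mathrm{LARGER}}, Y\bigr) \;\geq\; \mathrm{Recall}\bigl(\mathcal{C}_t^{\mathrm{lex}}, Y\bigr) \qquad \text{for all } t \geq 1 .
\]

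The second step is a case split on $T^{\mathrm{lex}}(R)$.
\begin{itemize}
\item If $T^{\mathrm{lex}}(R) = \infty$, the inequality $T^{\mathrm{LARGER}}(R) \leq \infty$ holds trivially.
\item Otherwise set $t_0 = T^{\mathrm{lex}}(R) < \infty$. By definition of the minimum, $\mathrm{Recall}(\mathcal{C}_{t_0}^{\mathrm{lex}}, Y) \geq R$. The pointwise ordering then gives $\mathrm{Recall}(\mathcal{C}_{t_0}^{\mathrm{LARGER}}, Y) \geq R$. So $t_0$ belongs to the set whose minimum defines $T^{\mathrm{LARGER}}(R)$, and therefore $T^{\mathrm{LARGER}}(R) \leq t_0$.
\end{itemize}
This case split needs no monotonicity in $t$ of either trajectory. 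We only compare the two trajectories at the single step $t_0$.

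The main obstacle is not the argument itself but making sure its hypotheses match those of Theorem~\ref{thm:recall-dominance}. In particular, the phrase ``identical query stream'' must mean that both regimes see the same sequence of actions $a_t$. Otherwise $\pi$, conditioned on the richer context $\mathcal{C}_{t-1}^{\mathrm{LARGER}}$, could issue different searches, and the two trajectories would diverge.

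I will therefore state explicitly that the proposition inherits this coupling: the lexical observations $\Omega_{\mathrm{lex}}(a_t)$, and hence the anchor sets $M_t$, coincide across the two regimes. Under that coupling the inclusion follows by induction from Eq.~\ref{eq:context}, since LARGER only adds $\Gamma$. One caveat remains. The projection $\Pi$ must be inclusion-preserving, or must act identically on the shared lexical part, which is the role of the bounded-augmentation Lemma~\ref{lem:bounded-aug}. I will cite that dependency rather than reprove it.
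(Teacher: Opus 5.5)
Your proof is correct and matches the paper's argument: both take the pointwise recall ordering from Theorem~\ref{thm:recall-dominance} and note that the step at which the lexical regime first reaches $R$ also reaches $R$ for LARGER, and your explicit case $T^{\mathrm{lex}}(R)=\infty$ only spells out what the paper's one-line proof leaves implicit. One small correction to your closing caveat: the inclusion-preservation (monotonicity) of $\Pi$ is a standing assumption stated at the start of Appendix~\ref{app:theory} and used inside the proof of Theorem~\ref{thm:recall-dominance}, so cite that assumption rather than Lemma~\ref{lem:bounded-aug}, which only bounds $|\Gamma|$ and the per-step token overhead.
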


This is a conservative bound: in practice, the richer observations from LARGER also lead the agent to issue more targeted subsequent queries, further amplifying step savings.

\begin{corollary}[Conditional token-cost dominance]
\label{cor:token-cost}
Let $C_{\mathrm{step}}^{\mathrm{lex}}$ denote the per-step token cost of the non-graph regime and $\Delta = m\,k\,L_{\mathrm{node}}^{\max}$ the per-step graph-augmentation overhead. For any target recall $R^* \in (0,1]$,
\[
\mathrm{Cost}_{\mathrm{LARGER}}(R^*) \;\leq\; T^{\mathrm{LARGER}}(R^*)\,\bigl(C_{\mathrm{step}}^{\mathrm{lex}} + \Delta\bigr),
\]
and LARGER strictly dominates the non-graph regime in total token cost whenever
\[
\frac{T^{\mathrm{lex}}(R^*) - T^{\mathrm{LARGER}}(R^*)}{T^{\mathrm{lex}}(R^*)}
\;>\;
\frac{\Delta}{C_{\mathrm{step}}^{\mathrm{lex}} + \Delta}.
\]
\end{corollary}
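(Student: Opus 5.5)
The plan is to bound the per-step cost of LARGER, multiply by the number of steps it needs, and then compare the two total costs directly. The first inequality follows from Lemma~\ref{lem:bounded-aug}. At every step, LARGER issues the same kind of lexical action as the non-graph regime, because the policy is shared and no new graph-specific actions exist. So the lexical part of its observation costs at most $C_{\mathrm{step}}^{\mathrm{lex}}$ tokens. The only extra material is the rendered set $\Gamma(M_t,q,\mathcal{C}_{t-1})$, which contains at most $mk$ nodes of at most $L_{\mathrm{node}}^{\max}$ tokens each, so it adds at most $\Delta$ tokens. Here I will interpret $C_{\mathrm{step}}^{\mathrm{lex}}$ as a uniform upper bound on the per-step lexical cost; this needs to be made explicit, since the contexts differ across regimes. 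LARGER stops once it reaches recall $R^*$, which takes $T^{\mathrm{LARGER}}(R^*)$ steps. Summing the per-step bound over $t\le T^{\mathrm{LARGER}}(R^*)$ gives $\mathrm{Cost}_{\mathrm{LARGER}}(R^*)\le T^{\mathrm{LARGER}}(R^*)(C_{\mathrm{step}}^{\mathrm{lex}}+\Delta)$. If $T^{\mathrm{LARGER}}(R^*)=\infty$, then Proposition~\ref{prop:step-dominance} forces $T^{\mathrm{lex}}(R^*)=\infty$ as well, and the statement is vacuous; I will therefore assume both times are finite.

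For the strict dominance, I will take $\mathrm{Cost}_{\mathrm{lex}}(R^*)=T^{\mathrm{lex}}(R^*)\,C_{\mathrm{step}}^{\mathrm{lex}}$, reading $C_{\mathrm{step}}^{\mathrm{lex}}$ as the per-step cost itself and not just a bound. It then suffices to show
\[
T^{\mathrm{LARGER}}(R^*)\bigl(C_{\mathrm{step}}^{\mathrm{lex}}+\Delta\bigr) \;<\; T^{\mathrm{lex}}(R^*)\,C_{\mathrm{step}}^{\mathrm{lex}}.
\]
Write $T_L=T^{\mathrm{LARGER}}(R^*)$ and $T_\ell=T^{\mathrm{lex}}(R^*)$. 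Dividing by $T_\ell(C_{\mathrm{step}}^{\mathrm{lex}}+\Delta)>0$ turns this into $T_L/T_\ell < C_{\mathrm{step}}^{\mathrm{lex}}/(C_{\mathrm{step}}^{\mathrm{lex}}+\Delta)$. Subtracting both sides from $1$ gives exactly the hypothesis $(T_\ell-T_L)/T_\ell > \Delta/(C_{\mathrm{step}}^{\mathrm{lex}}+\Delta)$. Each of these steps is reversible, so the hypothesis implies the displayed inequality, and chaining it with the first inequality gives $\mathrm{Cost}_{\mathrm{LARGER}}(R^*)<\mathrm{Cost}_{\mathrm{lex}}(R^*)$. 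Proposition~\ref{prop:step-dominance} ensures $T_L\le T_\ell$, so the left-hand side of the hypothesis is nonnegative and the condition is meaningful.

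The main obstacle is the modeling of cost rather than the algebra. The lexical cost at step $t$ depends on the action $a_t=\pi(q,\mathcal{C}_{t-1})$, and the two regimes have different contexts, so their actions, and hence their lexical costs, could in principle differ. I would handle this by stating explicitly that $C_{\mathrm{step}}^{\mathrm{lex}}$ is a common per-step lexical cost, either a uniform constant or a worst-case bound, applied in both regimes. This is consistent with the shared query stream assumed in Proposition~\ref{prop:step-dominance}. I would also note that the cost already counted in the bounded context $\Pi$ is not charged twice. With those conventions in place, the remaining steps are direct.
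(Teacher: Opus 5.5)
Your proposal is correct and follows the paper's proof: Lemma~\ref{lem:bounded-aug} bounds the per-step overhead by $\Delta$, summing over $T^{\mathrm{LARGER}}(R^*)$ steps gives the first bound, and comparing with $T^{\mathrm{lex}}(R^*)\,C_{\mathrm{step}}^{\mathrm{lex}}$ and rearranging gives the stated condition. Your extra conventions make explicit what the paper leaves implicit: treating $C_{\mathrm{step}}^{\mathrm{lex}}$ as a common uniform per-step cost in both regimes, checking that the algebra is reversible, and handling infinite hitting times. The one case your ``therefore'' skips, $T^{\mathrm{LARGER}}(R^*)<\infty=T^{\mathrm{lex}}(R^*)$, is trivial, since a finite cost beats an infinite one.
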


\noindent Proofs and supporting lemmas are deferred to \hyperref[app:theory]{Appendix~\ref*{app:theory}}. In real-world code-repository settings, $\Delta \ll C_{\mathrm{step}}^{\mathrm{lex}}$ (a few KB of injected neighbors), so even modest step savings yield a strict cost win. The gain thus comes from fewer agent interactions, not from cheaper individual steps.

\begin{algorithm}[t]\small
\caption{\small LARGER: Commit-Aware Active-Set Solver for \hyperref[eq:objective]{Eq.~\ref*{eq:objective}}}
\label{alg:larger}
\begin{algorithmic}[1]\small
\REQUIRE Target commits \(\{c\}\) with input query \(q_c\) at each; initial graph \(G_0\); hyperparameters \(T,K,k,\theta\)
\ENSURE Localizations \(\{R_{c,T}\}\)

\FOR{each target commit \(c\)}
    \STATE $G_c \gets \mathcal{A}\!\bigl(G_{c-1},\,\mathrm{diff}(c{-}1,\,c)\bigr)$ by \hyperref[eq:alignment]{Eq.~\ref*{eq:alignment}}.
    \STATE Reduce \hyperref[eq:objective]{Eq.~\ref*{eq:objective}} to the fixed-snapshot objective \hyperref[eq:global-subgraph]{Eq.~\ref*{eq:global-subgraph}} for query \(q_c\) on \(G_c\); initialize \(\mathcal{C}_0=\emptyset\), \(\hat g_{1:0}=\emptyset\).
    \FOR{\(t=1,\ldots,T\)}
        \STATE Select active anchors \(M_t\) by \hyperref[eq:active]{Eq.~\ref*{eq:active}}.
        \STATE Select graph-augmented evidence by \hyperref[eq:scoring]{Eqs.~\ref*{eq:scoring}--\ref*{eq:expand-aggregate}}.
        \STATE Accumulate the restricted subgraph by \hyperref[eq:cumulative]{Eqs.~\ref*{eq:cumulative} and~\ref*{eq:anchor-subgraph}}.
        \STATE Update bounded context by \hyperref[eq:context]{Eq.~\ref*{eq:context}}.
    \ENDFOR
    \STATE Output \(R_{c,T}=\psi(\mathcal{C}_T,q_c)\).
\ENDFOR
\end{algorithmic}
\end{algorithm}
\normalsize
\subsection{Commit-Aware Maintenance and Full Algorithm}
\label{sec:outer-loop}

The above solver for a snapshot $G_c$ at commit $c$ will be sequentialized to dynamic cases with sequential commits. To avoid reconstructing the graph from scratch at every commit, which can be prohibitively expensive (as stated in \hyperref[app:alignment]{Appendix~\ref*{app:alignment}}), we propose a lazy, local update. Specifically, let $\mathrm{diff}(c{-}1, c)$ denote the set of files added or modified between commits $c{-}1$ and $c$. LARGER defines $G_c$ from $G_{c-1}$ by an \emph{alignment} operator $\mathcal{A}$ that drops nodes whose files are no longer present at $c$ and merges in the parse output of the changed files:
\begin{equation}
G_c \;=\; \mathcal{A}\!\bigl(G_{c-1},\, \mathrm{diff}(c{-}1, c)\bigr).
\label{eq:alignment}
\end{equation}
The cached graph $G_{c-1}$ is never mutated; the unchanged subgraph carries over to $G_c$ together with its edge confidence weights $\omega$ and community labels $\kappa$ (\hyperref[app:graph-impl]{Appendix~\ref*{app:graph-impl}}), and only the changed files are re-parsed and re-linked.

Altogether, our LARGER algorithm is summarized in \hyperref[alg:larger]{Algorithm~\ref*{alg:larger}}, which nests two loops: a commit-switching outer loop and the agent-step inner loop of \hyperref[sec:inner-loop]{\S\ref*{sec:inner-loop}}; per-commit queries enter as inputs. At each target commit $c$ (line~1), the \emph{commit-switching loop} first refreshes the snapshot graph by the alignment update (line~2), so that $G_c$ is up-to-date without paying full reconstruction cost. After reducing to its fixed-snapshot form on $G_c$ for the input query $q_c$ and initializing the empty context (line~3), the \emph{agent-step loop} (lines~4--9) iterates for $T$ steps: at each step $t$ it (i)~refreshes the active anchor set $M_t$ from current lexical matches (line~5, \hyperref[eq:active]{Eq.~\ref*{eq:active}}), (ii)~assembles graph-augmented evidence by retrieving $k$-hop neighbors that pass the confidence threshold $\theta$ and re-weighting them with the community prior (line~6, \hyperref[eq:scoring]{Eqs.~\ref*{eq:scoring}--\ref*{eq:expand-aggregate}}), (iii)~accumulates these into the restricted subgraph $\hat{g}_t$ (line~7, \hyperref[eq:cumulative]{Eqs.~\ref*{eq:cumulative} and~\ref*{eq:anchor-subgraph}}), and (iv)~projects the result into the bounded context $\mathcal{C}_t$ under the per-step token budget (line~8, \hyperref[eq:context]{Eq.~\ref*{eq:context}}). The readout $\psi$ then outputs the localization $R_{c,T}=\psi(\mathcal{C}_T,q_c)$ (line~10) before the outer loop advances to the next commit.

\section{Experiments}
\label{sec:experiments}

Our experiments evaluate whether jointly improving graph quality and retrieval efficiency improves localization and downstream agent performance. We report three views of the evidence: (1)~main results against lexical, procedure-based, agent-based, and prior graph-enhanced baselines (\hyperref[sec:main_results]{\S\ref*{sec:main_results}}); (2)~computational efficiency in wall-clock time, tokens, and cost (\hyperref[sec:efficiency]{\S\ref*{sec:efficiency}}); and (3)~component ablations isolating the contribution of graph expansion, confidence scoring, and community priors (\hyperref[sec:ablation]{\S\ref*{sec:ablation}}).

\subsection{Experimental Setup}
\label{sec:setup}

\noindent\textbf{Datasets.}
We evaluate on four benchmarks (\hyperref[tab:dataset_summary]{Table~\ref*{tab:dataset_summary}}, \hyperref[app:datasets]{Appendix~\ref*{app:datasets}}). For \emph{code localization}, we use \textbf{LocBench}~\citep{chen2025locagent} (560 issues, 5 repositories) and \textbf{MuLocBench}~\citep{zhang2025mulocbench} (1,100 multi-file issues, 46 repositories). For \emph{downstream tasks} on 11 multilingual repositories (Python, Go, TypeScript, C), we use \textbf{SWE-Atlas Test Writing} (90 test-generation tasks) and \textbf{SWE-Atlas Codebase QA} (124 code-understanding questions)~\citep{scale2026sweatlas}. The SWE-Atlas tasks test whether retrieval gains transfer beyond localization.

\noindent\textbf{Metrics.}
For localization, we report file-level Acc@$K$ and Recall@$K$ in the main paper; full baseline results, including function-level metrics and additional rankings (Hit@$K$, MAP, MRR), are deferred to \hyperref[app:full_results]{Appendix~\ref*{app:full_results}}. For Test Writing, we combine Docker-based execution with mutation testing, rubric-based LLM judging of quality and coverage, and manifest correctness verification. For Codebase QA, we use rubric-based LLM judging against must-have criteria. We additionally report wall-clock time, token consumption, and cost. We report the average results of three independent runs of each method.

\noindent\textbf{Baselines.}
We compare against four categories: \emph{lexical} (\textbf{BM25}~\citep{robertson2009probabilistic}); \emph{procedure-based} (\textbf{Agentless}~\citep{xia2024agentless}); \emph{agent-based, non-graph} (\textbf{SWE-agent}~\citep{yang2024sweagent}, \textbf{OpenHands}~\citep{wang2024openhands}, \textbf{Codex}, \textbf{Claude Code}, and \textbf{mini-swe-agent}); and \emph{agent-based, graph-enhanced} (\textbf{CoSIL}~\citep{jiang2025cosil} and \textbf{LocAgent}~\citep{chen2025locagent}).

\noindent\textbf{Implementation.}
All in-house LLM-based methods use GPT-5.2. LARGER uses default hyperparameters $k=10$ and $\theta=0.5$; the sidecar graph index is built once per repository and reused across issues, and runtime augmentation occurs inside the agent's existing search loop with no additional tool calls. \hyperref[tab:main_results]{Table~\ref*{tab:main_results}} reports both the default fixed setting and an accuracy-oriented tuned setting for LARGER. All agent-based methods use matched exploration-step and context-window budgets. Full setup details and hyperparameter sweeps are provided in \hyperref[app:datasets]{Appendix~\ref*{app:datasets}}.

\subsection{Main Observations}
\label{sec:main_results}

We organize the empirical findings into a single \emph{Main Observations} subsection that spans both accuracy and efficiency. \hyperref[tab:main_results]{Table~\ref*{tab:main_results}} reports the file-level localization results on MuLocBench and LocBench.

\noindent\begin{minipage}{\textwidth}
\centering
\small
\resizebox{\textwidth}{!}{
\begin{tabular}{l cccc cccc}
\toprule
\multirow{2}{*}{\textbf{Method}} & \multicolumn{4}{c}{\textbf{LocBench}} & \multicolumn{4}{c}{\textbf{MuLocBench}} \\
\cmidrule(lr){2-5} \cmidrule(lr){6-9}
 & Acc@1 & Acc@5 & Recall@1 & Recall@5 & Acc@1 & Acc@5 & Recall@1 & Recall@5 \\
\midrule
\multicolumn{9}{l}{\textit{Lexical retrieval}} \\
BM25 & 23.4 & 49.3 & 26.9 & 55.7 & 10.6 & 24.5 & 16.4 & 35.3 \\
\midrule
\multicolumn{9}{l}{\textit{Procedure-based}} \\
Agentless & 56.1 & 68.9 & 61.2 & 76.6 & 20.4 & 28.9 & 29.5 & 42.0 \\
\midrule
\multicolumn{9}{l}{\textit{Agent-based}} \\
SWE-agent & 60.0 & 70.9 & 65.8 & 79.3 & 21.0 & 31.3 & 30.4 & 44.5 \\
OpenHands & 57.7 & 67.5 & 63.2 & 75.3 & 27.3 & 43.7 & 38.0 & 57.6 \\
Codex & 63.2 & 74.1 & 69.3 & 82.8 & 28.0 & 50.0 & 39.8 & 65.1 \\
\rowcolor{black!8} Claude Code$^{*}$ & 65.1 & 75.2 & 71.8 & 85.6 & \textbf{29.2} & 54.9 & \textbf{41.4} & 69.0 \\
mini-swe-agent & 60.6 & 70.7 & 66.9 & 79.7 & 26.9 & 46.6 & 38.1 & 61.4 \\
\midrule
\multicolumn{9}{l}{\textit{Agent-based (graph-enhanced)}} \\
CoSIL & 54.6 & 68.2 & 59.7 & 75.7 & 18.1 & 27.0 & 27.0 & 40.1 \\
LocAgent & 56.0 & 65.3 & 62.0 & 73.8 & 17.9 & 25.7 & 27.3 & 39.0 \\
\midrule
\rowcolor{cyan!8} \textbf{LARGER$_{\text{Fixed}}$ (Ours)} & 72.9 & 87.0 & 77.4 & 90.1 & 27.4 & 55.7 & 39.4 & 68.6 \\
\quad $\Delta_{\text{vs.\ Codex}}$ & {\color{forestgreen}+9.7} & {\color{forestgreen}+12.9} & {\color{forestgreen}+8.1} & {\color{forestgreen}+7.3} & {\color{burgundy}-0.6} & {\color{forestgreen}+5.7} & {\color{burgundy}-0.4} & {\color{forestgreen}+3.5} \\
\quad $\Delta_{\text{vs.\ Claude Code}^{*}}$ & {\color{forestgreen}+7.8} & {\color{forestgreen}+11.8} & {\color{forestgreen}+5.6} & {\color{forestgreen}+4.5} & {\color{burgundy}-1.8} & {\color{forestgreen}+0.8} & {\color{burgundy}-2.0} & {\color{burgundy}-0.4} \\
\rowcolor{cyan!8}
\textbf{LARGER$_{\text{Tuned}}$ (Ours)} & \textbf{77.1} & \textbf{89.1} & \textbf{81.5} & \textbf{92.3} & 28.0 & \textbf{60.0} & 40.0 & \textbf{72.0} \\
\quad $\Delta_{\text{vs.\ Codex}}$ & {\color{forestgreen}+13.9} & {\color{forestgreen}+15.0} & {\color{forestgreen}+12.2} & {\color{forestgreen}+9.5} & 0.0 & {\color{forestgreen}+10.0} & {\color{forestgreen}+0.2} & {\color{forestgreen}+6.9} \\
\quad $\Delta_{\text{vs.\ Claude Code}^{*}}$ & {\color{forestgreen}+12.0} & {\color{forestgreen}+13.9} & {\color{forestgreen}+9.7} & {\color{forestgreen}+6.7} & {\color{burgundy}-1.2} & {\color{forestgreen}+5.1} & {\color{burgundy}-1.4} & {\color{forestgreen}+3.0} \\
\bottomrule
\end{tabular}
}
\captionsetup{font=small}
\captionof{table}{File-level localization results (\%). Acc@$K$: fraction of instances where \emph{all} ground-truth files appear in the top-$K$ predictions. Recall@$K$: fraction of ground-truth files recovered within the top-$K$ predictions. All LLM-based methods use GPT-5.2, except Claude Code$^{*}$ (light gray row), which uses Claude-Opus-4.6 and is therefore not directly comparable in absolute terms. $\Delta$ rows report absolute point differences against Codex (the strongest same-backbone baseline) and Claude Code$^{*}$ (different backbone); green/red indicate gains/deficits. \textbf{Bold} indicates the best value in each column.}
\label{tab:main_results}
\end{minipage}

\paragraph{Best Acc@5 and Recall@5 on both benchmarks.}
LARGER achieves the best Acc@5 on both LocBench and MuLocBench under the fixed setting. On LocBench, $\text{LARGER}_{\text{Fixed}}$ pushes Acc@5 from the strongest-baseline value of 75.2 to 87.0 ($+11.8$ points) and Recall@5 above 90; on MuLocBench it improves Acc@5 from 50.0 to 55.7 ($+5.7$ points) and Recall@5 from 65.1 to 68.6. The per-repository oracle setting ($\text{LARGER}_{\text{Tuned}}$) further reaches 89.1 on LocBench and 60.0 on MuLocBench. The main exception is MuLocBench Acc@1, where Claude Code still leads; LARGER's advantage lies less in ranking the single best file first and more in recovering a broader set of relevant files within a short candidate list.

\begin{figure*}[h]
  \centering
  \begin{subfigure}[t]{0.325\textwidth}
    \centering
    \includegraphics[width=\linewidth]{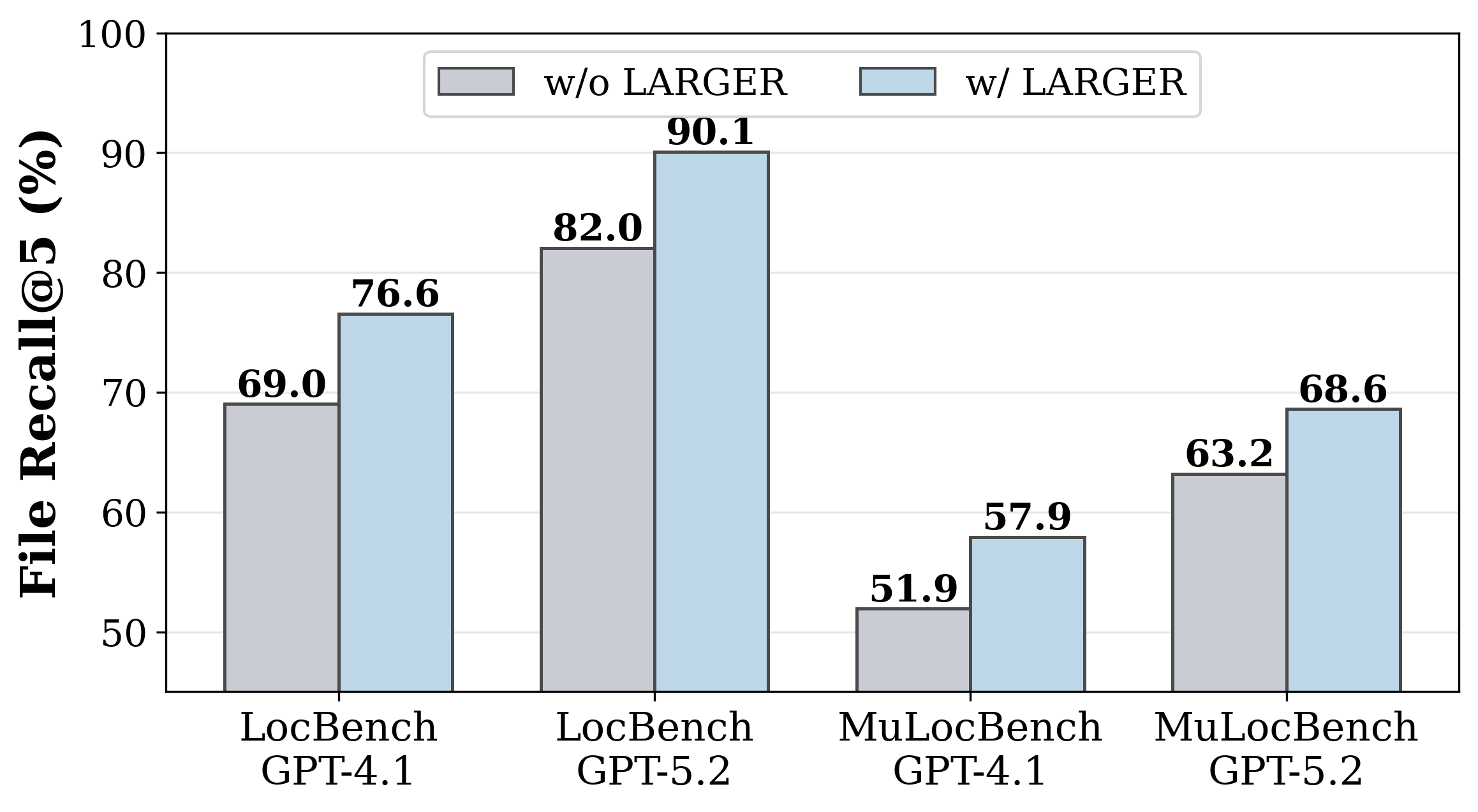}
    \caption{File-level Recall@5 with and without LARGER.}
    \label{fig:compare:a}
  \end{subfigure}
  \hfill
  \begin{subfigure}[t]{0.325\textwidth}
    \centering
    \includegraphics[width=\linewidth]{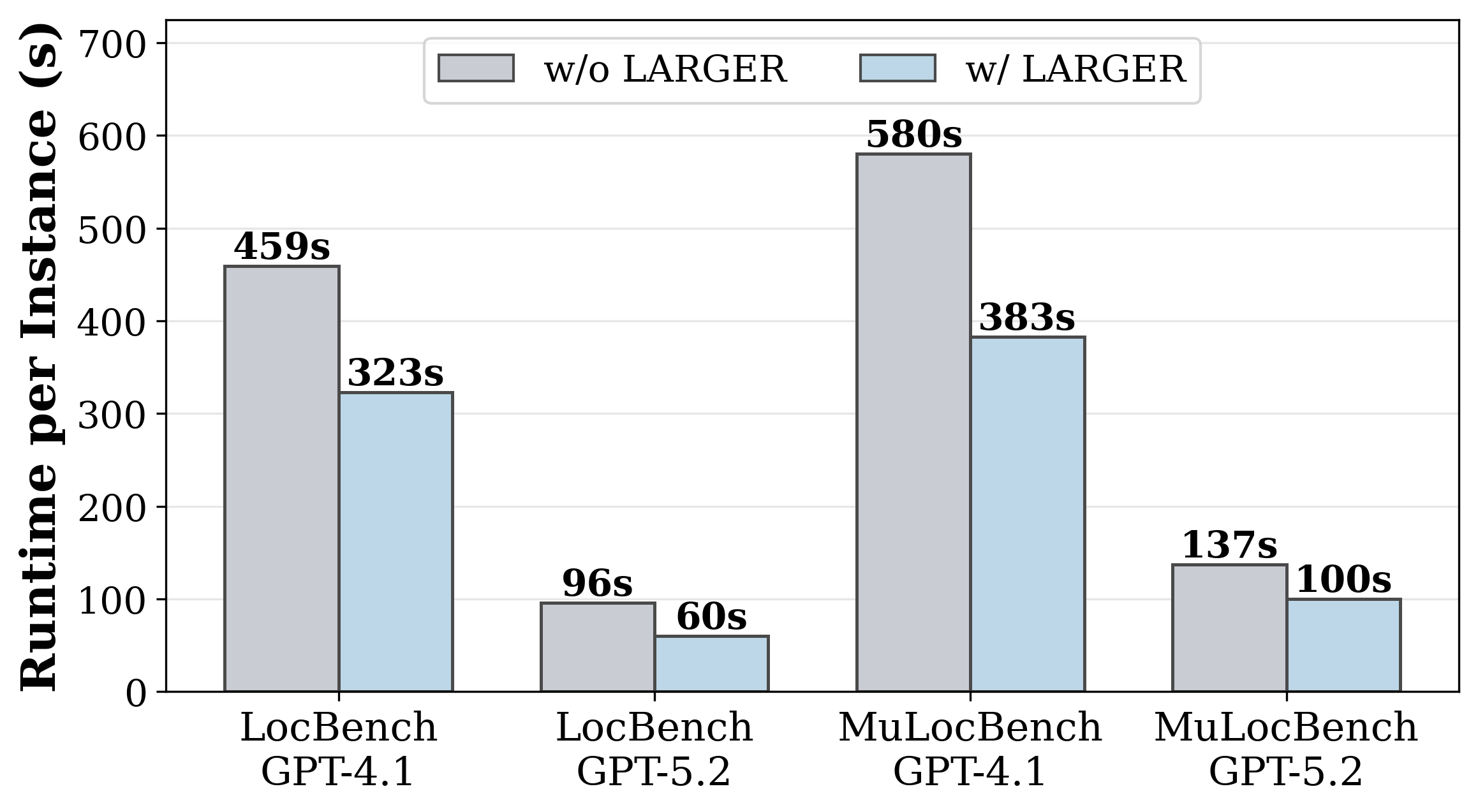}
    \caption{Mean wall-clock runtime per instance with and without LARGER.}
    \label{fig:compare:b}
  \end{subfigure}
  \begin{subfigure}[t]{0.335\textwidth}
    \centering
    \includegraphics[width=\linewidth]{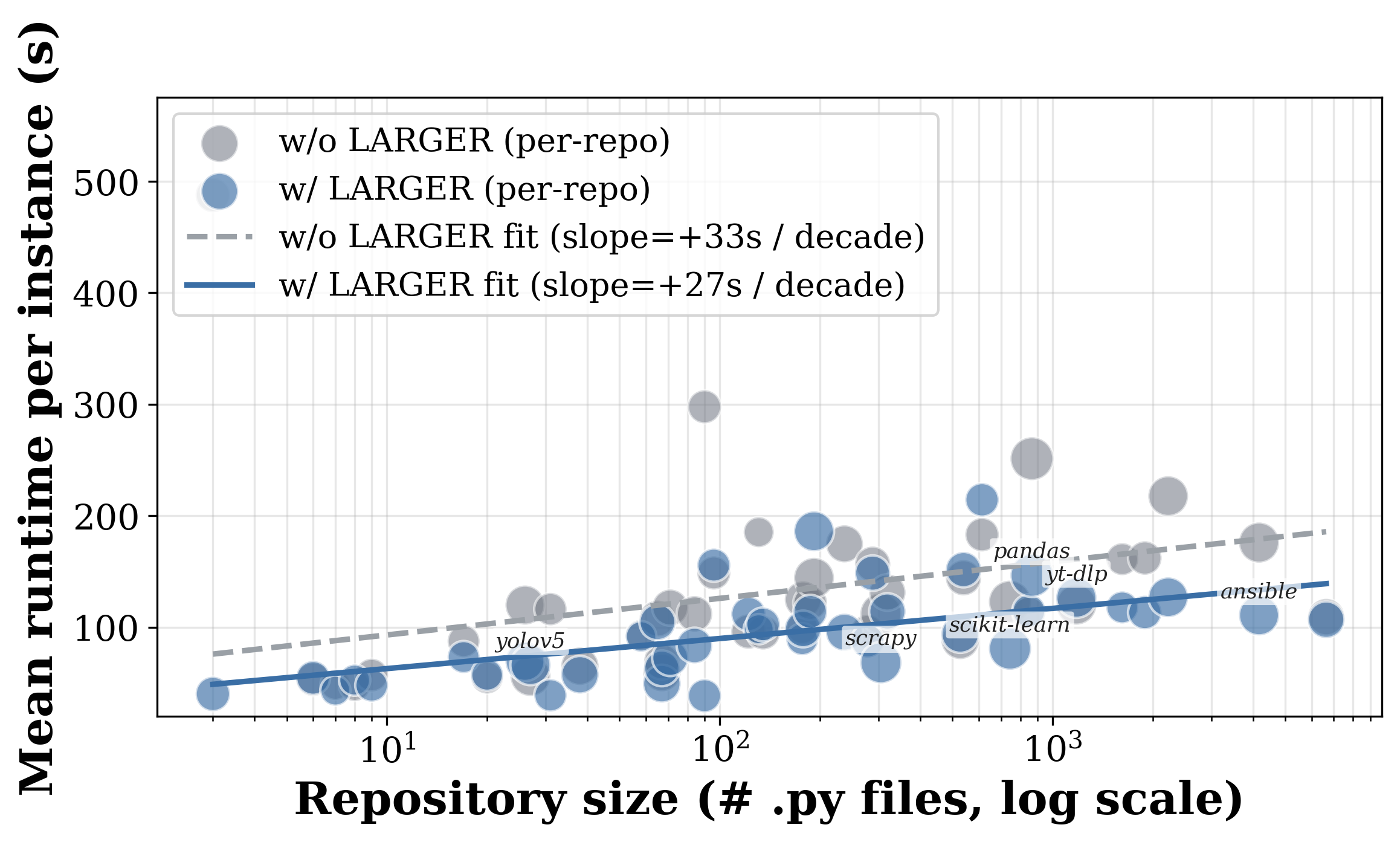}
    \caption{Repository-level runtime scaling versus repository size.}
    \label{fig:compare:c}
  \end{subfigure}
  \hfill
    \caption{Effect of enabling LARGER in a matched CLI agent. (a) File-level Recall@5 across LocBench and MuLocBench under GPT-4.1 and GPT-5.2. (b) Mean wall-clock time per instance for the same settings. (c) Repository-level runtime versus repository size; point size is proportional to issue count and lines are issue-weighted log-linear fits.}
  \label{fig:main:compare}
\end{figure*}

\paragraph{Toggling LARGER on/off lifts recall and lowers runtime.}
\hyperref[fig:main:compare]{Figure~\ref*{fig:main:compare}} complements the tables with matched ablations of LARGER itself. \hyperref[fig:compare:a]{Figures~\ref*{fig:compare:a} and~\ref*{fig:compare:b}} show that enabling LARGER improves Recall@5 \emph{and} reduces wall-clock time across both LocBench and MuLocBench under GPT-4.1 and GPT-5.2. \hyperref[fig:compare:c]{Figure~\ref*{fig:compare:c}} shows the same trend at repository scale: runtime rises with repository size for both settings, but the LARGER fit is shallower ($+27$\,s\,/\,decade vs.\ $+33$\,s\,/\,decade) and remains below the non-LARGER fit across the observed range.

\paragraph{Integration matters more than raw graph access.}
As shown in \hyperref[tab:main_results]{Table~\ref*{tab:main_results}}, CoSIL and LocAgent both trail the strongest non-graph baselines despite using structural information, whereas LARGER exceeds them. Graph evidence therefore helps only when delivered through a compact interface inside the agent's existing lexical loop rather than through fragmented or expensive traversal.

\paragraph{Localization gains transfer to downstream SWE-Atlas tasks.} Beyond localization, we evaluate LARGER on two downstream agentic software-engineering tasks from SWE-Atlas (Codebase QnA and Test Writing), which require multi-file reasoning, execution, and system-level understanding rather than isolated code edits. As \hyperref[tab:sweatlas]{Table~\ref*{tab:sweatlas}} reports, LARGER exceeds Claude Code by $+1.05$ on Codebase QnA and $+1.11$ on Test Writing, with larger margins against the same-backbone Codex baseline ($+2.42$ and $+5.56$, respectively); the retrieval gain therefore propagates beyond localization into multi-file reasoning and execution-level understanding.

\noindent
\begin{minipage}[t]{0.35\textwidth}
  \vspace{0pt}
  \centering
  \small
  \begin{tabular}{lcc}
  \toprule
  \textbf{Method} & QnA & TW\\
  \midrule
Codex & 29.83 & 32.22 \\
mini-swe-agent & 25.81 & 27.78 \\
Claude Code$^{*}$ & 31.20 &  36.67 \\
  \midrule
\textbf{LARGER$_{\text{Fixed}}$} & 32.25 & 37.78 \\
  \bottomrule
  \end{tabular}
  \captionof{table}{SWE-Atlas results (higher is better). QnA: Codebase QA. TW: Test Writing. All methods use GPT-5.2 except Claude Code$^{*}$, which uses Claude Opus 4.6.}
  \label{tab:sweatlas}
\end{minipage}
 \hfill
  \hfill
   \hfill
\begin{minipage}[t]{0.62\textwidth}
  \vspace{0pt}
  \centering
  \small
  \begin{tabular}{lcc}
  \toprule
  \textbf{Configuration} & \textbf{Acc@5} & \textbf{$\Delta$ \%} \\
  \midrule
  Full LARGER & 55.7 & -- \\
  \midrule
  w/o Graph expansion & 48.2 & $-13.5\%$ \\
  w/o Confidence scoring & 53.1 & $-4.7\%$ \\
  w/o Community detection & 53.4 & $-4.1\%$ \\
  \midrule
  w/o All modules & 46.8 & $-16.0\%$ \\
  \bottomrule
  \end{tabular}
  \captionof{table}{Component ablation on MuLocBench (Acc@5, \%) by removing retrieval components from the full LARGER system. $\Delta$ reports the relative percentage drop in Acc@5 with respect to the full system.}
  \label{tab:ablation}
\end{minipage}%

\paragraph{LARGER beats Codex on accuracy, runtime, and tokens.}
\label{sec:efficiency}
At the fixed operating point, LARGER improves on Codex along all three cost axes simultaneously: MuLocBench Acc@5 rises from 50.0 to 55.7 while mean wall-clock time drops from 139.9\,s to 99.9\,s and mean tokens from 521.8K to 353K; on LocBench, Acc@5 jumps from 74.1 to 87.0 with wall-clock time roughly halved (60.0\,s vs.\ 129.2\,s). The tuned setting trades a modest runtime increase for further accuracy (60.0 Acc@5 on MuLocBench, 89.1 on LocBench). The resulting accuracy--efficiency frontier is visualized in \hyperref[fig:accuracy_efficiency]{Figure~\ref*{fig:accuracy_efficiency}} (\hyperref[app:cost_decomp]{Appendix~\ref*{app:cost_decomp}}), where per-method tables, comparisons against cheaper baselines, and the LARGER cost decomposition are also reported.

\subsection{Ablation Study}
\label{sec:ablation}

To isolate the contribution of each component in the LARGER pipeline, we conduct an ablation study on the three retrieval components that most directly govern online graph exposure, together with one all-modules removal. All ablations are evaluated on MuLocBench with the same backbone LLM under the ablation-run configuration. \hyperref[tab:ablation]{Table~\ref*{tab:ablation}} reports Acc@5 and the absolute point drop relative to the corresponding full-system ablation run.

Removing graph expansion causes the largest single-component drop, from 55.7 to 48.2 Acc@5 ($-13.5\%$), confirming that explicit structural neighbor exposure is the primary source of LARGER's gain.
Once graph evidence is available, removing confidence scoring and community detection produces additional drops of $4.7\%$ and $4.1\%$, showing that noise control and subsystem-level priors materially improve retrieval quality on top of raw expansion.
Removing all graph modules produces the largest overall degradation (46.8 Acc@5, $-16.0\%$), and the all-modules drop exceeds any single-component drop, so the gains stack rather than substitute.

\section{Conclusion}
\label{sec:conclusion}
We recast repository context localization for CLI coding agents as \emph{Lexically Anchored Structural Localization}. LARGER realizes this view by pairing a multi-language, AST-based, confidence-scored graph with an active-set operator that anchors on the agent's own lexical queries and surfaces only confidence-filtered local neighborhoods, so structural evidence arrives inside the existing search loop without new tools or fragmented traversal. Across four benchmarks spanning localization, test writing, and codebase question answering, this design lifts MuLocBench file-level Acc@5 by up to 10 points over the strongest agent baseline while shifting the accuracy--efficiency frontier in wall-clock time and tokens, with ablations attributing the gain to the complementary effects of graph expansion, confidence scoring, and community priors.

\bibliographystyle{unsrtnat}
\bibliography{custom}

\appendix

\newpage
\section{Theoretical Analysis of LARGER}
\label{app:theory}

This appendix provides the proofs for \hyperref[thm:recall-dominance]{Theorem~\ref*{thm:recall-dominance}}, \hyperref[prop:step-dominance]{Proposition~\ref*{prop:step-dominance}}, and \hyperref[cor:token-cost]{Corollary~\ref*{cor:token-cost}} declared in \hyperref[sec:inner-loop]{Section~\ref*{sec:inner-loop}}, together with the supporting lemmas, assumption, and a finer quantitative discovery result. Throughout, $\mathcal{C}_t^{\mathrm{lex}}$ denotes the agent context produced by running the same policy $\pi$ and query stream with $\Gamma \equiv \emptyset$ (lexical-only retrieval), and $\mathcal{C}_t^{\mathrm{LARGER}}$ denotes the context produced by LARGER with $\Gamma = \Gamma$. We write $Y \subseteq V_c$ for the ground-truth target set and define $\mathrm{Recall}(\mathcal{C}, Y) = |\mathcal{C} \cap Y|/|Y|$. The compression operator $\Pi(\cdot)$ is assumed to be monotone in its input set, i.e., $\mathcal{C} \subseteq \mathcal{C}'$ implies $\Pi(\mathcal{C}) \subseteq \Pi(\mathcal{C}')$, and to preserve any subset that already fits within the bounded representation.

\subsection{Monotone Context Augmentation}

\begin{lemma}[Monotone context augmentation]
\label{lem:monotone}
If $\Pi$ preserves prior context (does not evict elements that already fit in the bounded representation), then $\mathcal{C}_{t-1} \subseteq \mathcal{C}_t$ for all $t \geq 1$.
\end{lemma}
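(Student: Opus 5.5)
The proof will unfold the context update in Eq.~\eqref{eq:context} once. At step $t$ the context is
\[
\mathcal{C}_t = \Pi\bigl(\mathcal{C}_{t-1} \cup M_t \cup \Gamma(M_t,q,\mathcal{C}_{t-1})\bigr).
\]
Write $X_t := \mathcal{C}_{t-1} \cup M_t \cup \Gamma(M_t,q,\mathcal{C}_{t-1})$ for the pre-compression set. Elementary set algebra gives $\mathcal{C}_{t-1} \subseteq X_t$. The claim therefore reduces to showing that $\Pi$ does not remove any element of $\mathcal{C}_{t-1}$ when applied to $X_t$.

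This is exactly where the hypothesis and the standing assumption stated at the start of this appendix are used. We first observe that $\mathcal{C}_{t-1}$ already fits within the bounded representation. For $t \ge 2$ this holds because $\mathcal{C}_{t-1}$ is itself an output of $\Pi$. For $t=1$ it holds because $\mathcal{C}_0=\emptyset$, as initialized in Algorithm~\ref{alg:larger}. Since $\Pi$ preserves any subset that already fits, it follows that $\mathcal{C}_{t-1} \subseteq \Pi(X_t) = \mathcal{C}_t$.

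An alternative route is to combine monotonicity with idempotence: $\mathcal{C}_{t-1} = \Pi(\mathcal{C}_{t-1}) \subseteq \Pi(X_t)$. This requires $\Pi$ to fix its own outputs, which is again what ``preserves subsets that already fit'' provides. I would state this version as a remark. No induction is needed for the one-step inclusion, but chaining the one-step inclusions gives $\mathcal{C}_s \subseteq \mathcal{C}_t$ for all $s \le t$, which is the form used later in Theorem~\ref{thm:recall-dominance}.

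The only delicate step is making precise that $\mathcal{C}_{t-1}$ ``already fits.'' The assumption on $\Pi$ is informal, so the proof will fix an interpretation. Specifically, it will read ``preserves prior context'' as saying that $\Pi(A \cup B) \supseteq A$ whenever $A$ lies in the range of $\Pi$ (or $A=\emptyset$). The argument above is then immediate. I expect no further obstacle: everything else is set inclusion.
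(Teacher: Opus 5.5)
Your proposal is correct and is essentially the paper's argument: unfold Eq.~\eqref{eq:context}, observe that the argument of $\Pi$ contains $\mathcal{C}_{t-1}$, and invoke the non-evicting hypothesis. Your explicit check that $\mathcal{C}_{t-1}$ already fits (it is an output of $\Pi$, or $\emptyset$ at $t=1$) makes precise a step the paper leaves implicit; in your idempotence remark, only $\mathcal{C}_{t-1} \subseteq \Pi(\mathcal{C}_{t-1})$ is needed, not equality.
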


\begin{proof}
By the update rule, $\mathcal{C}_t = \Pi\bigl(\mathcal{C}_{t-1} \cup M_t \cup \Gamma(M_t,q,\mathcal{C}_{t-1})\bigr)$. The argument of $\Pi$ is a superset of $\mathcal{C}_{t-1}$, and by the non-evicting assumption $\Pi$ retains every element of $\mathcal{C}_{t-1}$, yielding $\mathcal{C}_{t-1} \subseteq \mathcal{C}_t$.
\end{proof}

\subsection{Proof of Theorem~\ref{thm:recall-dominance} (Recall Dominance)}

\begin{proof}[Proof of Theorem~\ref{thm:recall-dominance}]
By induction on $t$. Base case $t=0$: both regimes start with $\mathcal{C}_0 = \emptyset$, and the inclusion is trivial. Inductive step: suppose $\mathcal{C}_{t-1}^{\mathrm{lex}} \subseteq \mathcal{C}_{t-1}^{\mathrm{LARGER}}$. With identical $\pi$ and $q_t$ across regimes, both regimes produce the same lexical matches $M_t$. The non-graph regime updates to $\Pi(\mathcal{C}_{t-1}^{\mathrm{lex}} \cup M_t)$, while LARGER updates to $\Pi(\mathcal{C}_{t-1}^{\mathrm{LARGER}} \cup M_t \cup \Gamma(M_t,q,\mathcal{C}_{t-1}^{\mathrm{LARGER}}))$, whose argument is a superset of $\mathcal{C}_{t-1}^{\mathrm{lex}} \cup M_t$. Monotonicity of $\Pi$ in its input set yields $\mathcal{C}_t^{\mathrm{lex}} \subseteq \mathcal{C}_t^{\mathrm{LARGER}}$. Since $\mathrm{Recall}(\cdot, Y)$ is non-decreasing in its first argument, the recall inequality follows.
\end{proof}

\subsection{Bounded Per-Step Augmentation}

\begin{lemma}[Bounded per-step augmentation]
\label{lem:bounded-aug}
Suppose the lexical matcher returns at most $m$ anchors per step, $\Gamma$ retains at most $k$ neighbors per anchor, and each rendered graph node consumes at most $L_{\mathrm{node}}^{\max}$ tokens. Then for every step $t$,
\[
\bigl|\Gamma(M_t, q, \mathcal{C}_{t-1})\bigr| \;\leq\; m\,k,
\]
and the per-step token overhead introduced by $\Gamma$ beyond lexical retrieval is at most $\Delta = m\,k\,L_{\mathrm{node}}^{\max}$. The cumulative augmentation after $T$ steps satisfies
\[
\Bigl| \bigcup_{t=1}^{T} \bigl(M_t \cup \Gamma(M_t, q, \mathcal{C}_{t-1})\bigr) \Bigr|
\;\leq\;
\sum_{t=1}^{T} |M_t| \;+\; T\,m\,k.
\]
The bounds are independent of repository size $|V_c|$ and graph maximum degree.
\end{lemma}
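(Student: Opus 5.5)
The proof is a direct counting argument built on the per-anchor selection rule in Eq.~\eqref{eq:topk} and the aggregation in Eq.~\eqref{eq:expand-aggregate}.

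\emph{Per-step bound.} Eq.~\eqref{eq:topk} restricts each retained set to $|\mathcal{N}_{K,t}^{*}(v)| \le k$, independently of $v$, $t$, or the size of $\mathcal{N}_K^{\theta}(v)$. By Eq.~\eqref{eq:expand-aggregate}, $\Gamma(M_t,q,\mathcal{C}_{t-1})$ is the union of these sets over $v \in M_t$. Subadditivity of cardinality under unions then gives
\[
|\Gamma(M_t,q,\mathcal{C}_{t-1})| \le \sum_{v\in M_t} |\mathcal{N}_{K,t}^{*}(v)| \le |M_t|\,k \le m\,k .
\]
Here I use $|M_t| \le m$, since $\mathrm{align}$ maps each lexical match to at most one anchor node, or at least the matcher caps anchors at $m$ per the hypothesis.

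\emph{Token overhead.} The context update Eq.~\eqref{eq:context} adds $\Gamma(\cdot)$ on top of $M_t$, so the graph-specific contribution to the prompt is the rendering of at most $mk$ nodes. Each node costs at most $L_{\mathrm{node}}^{\max}$ tokens, so the overhead is at most $\Delta = m\,k\,L_{\mathrm{node}}^{\max}$. The lexical part $M_t$ is charged to the lexical regime and is not counted in the overhead.

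\emph{Cumulative bound.} Apply union subadditivity again, first over $t$ and then within each step:
\[
\Bigl|\bigcup_{t=1}^{T}(M_t\cup\Gamma_t)\Bigr| \le \sum_{t=1}^{T}\bigl(|M_t|+|\Gamma_t|\bigr),
\]
and substitute the per-step bound $|\Gamma_t| \le mk$.

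\emph{Independence from repository size and degree.} None of the constants $m$, $k$, $L_{\mathrm{node}}^{\max}$ depend on $|V_c|$ or on the maximum degree. The top-$k$ truncation in Eq.~\eqref{eq:topk} is exactly what cuts the dependence on $|\mathcal{N}_K(v)|$, which could otherwise grow with the degree raised to the power $K$.

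\emph{Main subtlety.} The arithmetic is trivial; the care needed is in interpretation. First, the hypothesis must cap $|M_t|$ itself (the post-alignment anchors) by $m$, not merely the number of raw lexical matches. Second, the token overhead has to be attributed to $\Gamma$ alone, so that $M_t$ is not double-counted. I would state both conventions explicitly at the start of the proof.
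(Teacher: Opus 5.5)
Your proposal is correct and follows essentially the same counting argument as the paper: the top-$k$ cap per anchor, $|M_t|\le m$, and union subadditivity give $|\Gamma(M_t,q,\mathcal{C}_{t-1})|\le mk$; multiplying by $L_{\mathrm{node}}^{\max}$ gives $\Delta$; and summing over $t$ gives the cumulative bound. Your remark that the cap $m$ must apply to the post-alignment anchor set $M_t$ (rather than to raw lexical matches) is a useful clarification that the paper leaves implicit in its hypothesis.
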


\begin{proof}
By construction, $\Gamma(M_t, q, \mathcal{C}_{t-1}) = \bigcup_{v \in M_t} \mathcal{N}_{K,t}^{*}(v)$, where each $\mathcal{N}_{K,t}^{*}(v)$ is a $\mathrm{Top\text{-}k}$ selection from $\mathcal{N}_K(v)$, hence $|\mathcal{N}_{K,t}^{*}(v)| \leq k$. With $|M_t| \leq m$, the per-step bound follows by counting and the cumulative bound by summing over $t$. The token bound follows from the $L_{\mathrm{node}}^{\max}$ assumption on rendered node size. None of these steps reference $|V_c|$ or graph degree.
\end{proof}

This bound is a property of the construction parameters and does \emph{not} translate into an asymptotic wall-clock guarantee, because per-step LLM interaction cost dominates the actual runtime. We use it only as input to the token-cost analysis below.

\subsection{Reachability Assumption}

\begin{assumption}[$(K,\theta)$-reachability]
\label{ass:reachability}
Every relevant node $y \in Y$ is reachable from some realized lexical anchor $v \in \bigcup_t M_t$ within $K$ hops along edges whose confidence weight $\omega(\cdot)$ exceeds the threshold $\theta$.
\end{assumption}

This assumption is mild in practice because (a) lexical anchors are abundant in code repositories due to identifier-rich queries, and (b) confidence-weighted $K$-hop neighborhoods with the values in \hyperref[app:confidence]{Appendix~\ref*{app:confidence}} cover the standard caller--callee, import, inheritance, test, and documentation links along which related code typically lies.

\subsection{Proof of Proposition~\ref{prop:step-dominance} (Step-Count Dominance)}

\begin{proof}[Proof of Proposition~\ref{prop:step-dominance}]
By \hyperref[thm:recall-dominance]{Theorem~\ref*{thm:recall-dominance}}, $\mathrm{Recall}(\mathcal{C}_t^{\mathrm{LARGER}}, Y) \geq \mathrm{Recall}(\mathcal{C}_t^{\mathrm{lex}}, Y)$ at every step $t$. Hence any threshold $R$ first reached by the non-graph regime at step $T^{\mathrm{lex}}(R)$ is reached by LARGER no later than $T^{\mathrm{lex}}(R)$.
\end{proof}

\paragraph{Remark on the fixed-stream assumption.}
Step-count dominance compares the two regimes on the same query stream $\{q_t\}$. In deployment, the agent generates $q_t$ adaptively from observations, so the LARGER regime may issue different $q_t$ from the lexical-only regime. Because LARGER's observations are supersets of the lexical ones, the additional information typically leads to more targeted subsequent queries; the step savings observed empirically in \hyperref[sec:efficiency]{Section~\ref*{sec:efficiency}} can therefore exceed the step-count dominance bound. The same-stream assumption thus yields a \emph{conservative} bound, not a best-case one.

\subsection{Proof of Corollary~\ref{cor:token-cost} (Conditional Token-Cost Dominance)}

\begin{proof}[Proof of Corollary~\ref{cor:token-cost}]
By \hyperref[lem:bounded-aug]{Lemma~\ref*{lem:bounded-aug}}, the per-step token overhead of $\Gamma$ beyond the non-graph regime is at most $\Delta = m\,k\,L_{\mathrm{node}}^{\max}$. The total token cost of LARGER to reach recall $R^*$ is therefore bounded by $T^{\mathrm{LARGER}}(R^*)\,(C_{\mathrm{step}}^{\mathrm{lex}} + \Delta)$, while the non-graph regime requires $T^{\mathrm{lex}}(R^*)\,C_{\mathrm{step}}^{\mathrm{lex}}$. Setting the former strictly less than the latter and rearranging yields the displayed condition.
\end{proof}

We do not claim unconditional wall-clock dominance: the corollary bounds only the token-budget contribution, while wall-clock time also depends on network latency, serving infrastructure, and prompt-cache behavior outside the scope of this analysis.

\subsection{Proof of Theorem~\ref{thm:quant-discovery} (Quantitative Discovery Advantage)}

\begin{proof}[Proof of Theorem~\ref{thm:quant-discovery}]
By \hyperref[thm:recall-dominance]{Theorem~\ref*{thm:recall-dominance}}, $\mathcal{C}_T^{\mathrm{lex}} \subseteq \mathcal{C}_T^{\mathrm{LARGER}}$, so $|\mathcal{C}_T^{\mathrm{LARGER}} \cap Y| - |\mathcal{C}_T^{\mathrm{lex}} \cap Y| = |H_T|$. Dividing by $|Y|$ yields the equality. The non-emptiness of $H_T$ under \hyperref[ass:reachability]{Assumption~\ref*{ass:reachability}} follows because any structurally reachable but lexically hidden $y$ enters $\mathcal{C}_T^{\mathrm{LARGER}}$ via $\Gamma$ at the step in which a corresponding anchor is realized, while remaining outside $\mathcal{C}_T^{\mathrm{lex}}$ by construction.
\end{proof}

\subsection{Discussion: Efficiency Comes from Fewer Interactions, Not Cheaper Steps}

The combination of \hyperref[thm:recall-dominance]{Theorem~\ref*{thm:recall-dominance}}, \hyperref[prop:step-dominance]{Proposition~\ref*{prop:step-dominance}}, \hyperref[lem:bounded-aug]{Lemma~\ref*{lem:bounded-aug}}, and \hyperref[cor:token-cost]{Corollary~\ref*{cor:token-cost}} establishes the following picture. LARGER is not necessarily cheaper than lexical-only retrieval on a per-step basis: each step may inject up to $\Delta$ extra tokens to render graph augmentation. The efficiency gain comes instead from \emph{reducing the number of interaction steps required to reach task-relevant evidence}, while the per-step augmentation overhead is bounded independently of the repository size. The fixed-query-stream comparison is conservative; in deployment, the richer observations LARGER provides typically lead the agent to issue different and more targeted subsequent queries, so the empirical step savings can exceed the bound proved here. This is consistent with the wall-clock and token reductions reported in \hyperref[sec:efficiency]{Section~\ref*{sec:efficiency}}.

\section{Full Experimental Results}
\label{app:full_results}

\begin{table*}[t]
\centering
\small
\resizebox{\textwidth}{!}{
\begin{tabular}{l cccccc cccccc}
\toprule
\multirow{2}{*}{\textbf{Method}} & \multicolumn{6}{c}{\textbf{File-Level}} & \multicolumn{6}{c}{\textbf{Function-Level}} \\
\cmidrule(lr){2-7} \cmidrule(lr){8-13}
 & Hit@1 & Acc@1 & Acc@5 & Rec@5 & MAP & MRR & Hit@1 & Acc@1 & Acc@5 & Rec@5 & MAP & MRR \\
\midrule
BM25 & 33.9 & 23.4 & 49.3 & 55.7 & 39.7 & 47.0 & 24.5 & 13.6 & 24.1 & 29.9 & 24.0 & 31.6 \\
Agentless & 72.1 & 56.1 & 68.9 & 76.6 & 68.9 & 78.8 & 41.3 & 24.1 & 39.8 & 48.6 & 41.1 & 49.5 \\
SWE-agent & 78.2 & 60.0 & 70.9 & 79.3 & 74.4 & 83.5 & 53.9 & 30.5 & 44.5 & 54.9 & 48.9 & 60.4 \\
OpenHands & 41.3 & 31.1 & 35.9 & 40.1 & 38.5 & 42.8 & 35.5 & 21.6 & 29.3 & 35.1 & 32.3 & 38.0 \\
Codex & \textbf{82.3} & \textbf{63.2} & \textbf{74.1} & \textbf{82.8} & \textbf{78.1} & \textbf{86.4} & \textbf{70.5} & \textbf{42.3} & \textbf{57.9} & \textbf{70.3} & \textbf{64.7} & \textbf{76.9} \\
mini-swe-agent & 80.6 & 60.6 & 70.7 & 79.7 & 75.2 & 84.8 & 59.1 & 34.6 & 52.2 & 63.4 & 55.2 & 66.5 \\
CoSIL & 70.7 & 54.6 & 68.2 & 75.7 & 67.6 & 77.2 & 62.0 & 35.2 & 49.1 & 60.8 & 53.9 & 67.7 \\
LocAgent & 74.5 & 56.0 & 65.3 & 73.8 & 68.7 & 78.9 & 9.7 & 5.0 & 11.7 & 15.8 & 13.0 & 16.8 \\
\bottomrule
\end{tabular}
}
\caption{Full baseline results on LocBench~\citep{chen2025locagent} (\%). File-level and function-level localization metrics for the baseline methods, complementing \hyperref[tab:main_results]{Table~\ref*{tab:main_results}} with additional metrics (Hit@$K$, MAP, MRR). All baselines use GPT-5.2. Hit@$K$: any ground-truth entity in top-$K$. Acc@$K$: all ground-truth entities in top-$K$. Rec@$K$: recall at $K$.}
\label{tab:locbench_full}
\end{table*}

\begin{table*}[t]
\centering
\small
\resizebox{\textwidth}{!}{
\begin{tabular}{l cccccc cccccc}
\toprule
\multirow{2}{*}{\textbf{Method}} & \multicolumn{6}{c}{\textbf{File-Level}} & \multicolumn{6}{c}{\textbf{Function-Level}} \\
\cmidrule(lr){2-7} \cmidrule(lr){8-13}
 & Hit@1 & Acc@1 & Acc@5 & Rec@5 & MAP & MRR & Hit@1 & Acc@1 & Acc@5 & Rec@5 & MAP & MRR \\
\midrule
BM25 & 28.5 & 10.6 & 24.5 & 35.3 & 27.2 & 38.9 & 11.0 & 3.9 & 10.7 & 15.0 & 10.8 & 16.2 \\
Agentless & 49.6 & 20.4 & 28.9 & 42.0 & 36.4 & 55.3 & 17.4 & 7.4 & 14.4 & 19.6 & 15.2 & 22.3 \\
SWE-agent & 50.7 & 21.0 & 31.3 & 44.5 & 40.0 & 56.5 & 21.9 & 10.4 & 15.7 & 21.2 & 18.4 & 25.5 \\
OpenHands & 35.1 & 16.3 & 26.4 & 33.7 & 30.6 & 38.0 & 16.3 & 7.5 & 12.4 & 16.3 & 14.1 & 18.8 \\
Codex & \textbf{64.8} & \textbf{28.0} & \textbf{50.0} & \textbf{65.1} & \textbf{57.3} & \textbf{72.6} & 26.5 & 12.5 & \textbf{20.3} & \textbf{27.2} & \textbf{23.4} & \textbf{31.8} \\
mini-swe-agent & 61.7 & 26.9 & 46.6 & 61.4 & 53.2 & 69.2 & \textbf{27.5} & \textbf{12.6} & 20.2 & 26.7 & 22.5 & 31.4 \\
CoSIL & 47.0 & 18.1 & 27.0 & 40.1 & 34.1 & 52.9 & 21.3 & 8.4 & 13.0 & 18.2 & 15.5 & 24.0 \\
LocAgent & 47.8 & 17.9 & 25.7 & 39.0 & 34.7 & 53.1 & 4.5 & 1.7 & 4.6 & 6.4 & 5.2 & 7.6 \\
\bottomrule
\end{tabular}
}
\caption{Full baseline results on MuLocBench~\citep{zhang2025mulocbench} (\%). File-level and function-level localization metrics for the baseline methods, complementing \hyperref[tab:main_results]{Table~\ref*{tab:main_results}} with additional metrics (Hit@$K$, MAP, MRR). All baselines use GPT-5.2. MuLocBench targets multi-file issues from 46 repositories.}
\label{tab:mulocbench_full}
\end{table*}

\subsection{Efficiency Tables and Cost Decomposition}
\label{app:cost_decomp}

This appendix collects the full per-method efficiency comparison summarized in \hyperref[sec:efficiency]{Section~\ref*{sec:efficiency}} (\hyperref[tab:runtime]{Tables~\ref*{tab:runtime} and~\ref*{tab:tokens}}) and a detailed decomposition of LARGER's per-instance token usage and cost. All numbers are computed from agent execution traces captured during the headline runs (\textsc{enhanced\_nexus\_v2\_extended\_full} for MuLocBench, \textsc{locbench\_v2ext\_aligned} for LocBench), using the per-step \texttt{step\_finish} telemetry emitted by the agent runtime. Coverage is 1097/1098 instances on MuLocBench and 374/560 on LocBench (the remaining LocBench instances failed with API authentication errors and are excluded).

\paragraph{Comparison against cheaper baselines and the alternative graph design.}
Agentless, CoSIL, and mini-swe-agent remain cheaper in absolute terms (\hyperref[tab:runtime]{Tables~\ref*{tab:runtime}--\ref*{tab:tokens}}) but are also markedly less accurate, occupying the lower-left of \hyperref[fig:accuracy_efficiency]{Figure~\ref*{fig:accuracy_efficiency}}. The other graph-enhanced agent, LocAgent, is dominated on both axes: unbounded global traversal is both more expensive and less effective than LARGER's tightly integrated graph augmentation. Together, the fixed and tuned LARGER settings occupy the upper frontier in both panels of \hyperref[fig:accuracy_efficiency]{Figure~\ref*{fig:accuracy_efficiency}}.

\begin{figure*}[t]
\centering
\includegraphics[width=\textwidth]{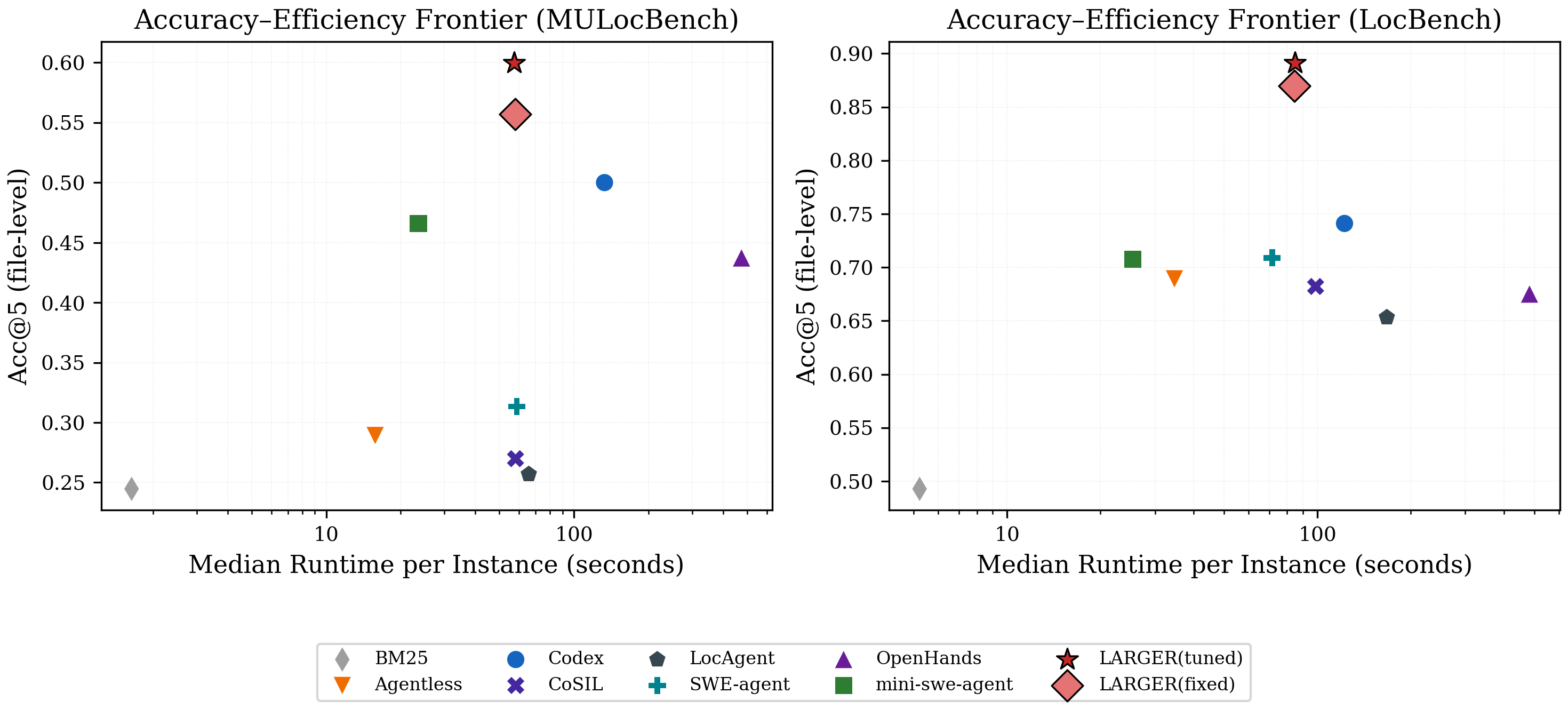}
\caption{\textbf{Accuracy--efficiency frontier on MuLocBench and LocBench.} Each point is one method, with file-level Acc@5 on the y-axis and median runtime per instance on the x-axis. The two LARGER operating points lie on the upper frontier in both benchmarks.}
\label{fig:accuracy_efficiency}
\end{figure*}

\begin{table}[t]
\centering
\small
\begin{tabular}{l rr rr}
\toprule
\multirow{2}{*}{\textbf{Method}} & \multicolumn{2}{c}{\textbf{MuLocBench}} & \multicolumn{2}{c}{\textbf{LocBench}} \\
\cmidrule(lr){2-3} \cmidrule(lr){4-5}
 & Mean (s) & Median (s) & Mean (s) & Median (s) \\
\midrule
BM25 & 6.4 & 1.6 & 8.0 & 5.2 \\
Agentless & 39.3 & 15.8 & 48.5 & 34.7 \\
SWE-agent & 72.3 & 58.8 & 79.9 & 71.5 \\
OpenHands & 635.7 & 474.6 & 710.7 & 483.1 \\
Codex & 139.9 & 132.8 & 129.2 & 122.4 \\
Claude Code & 98.3 & 81.7 & 69.4 & 54.4 \\
mini-swe-agent & 32.3 & 23.6 & 33.5 & 25.4 \\
CoSIL & 109.7 & 58.1 & 151.9 & 98.8 \\
LocAgent & 263.4 & 65.7 & 623.6 & 167.6 \\
\midrule
\textbf{LARGER (Ours)} & 99.9 & 71.6 & 60.0 & 40.9 \\
\bottomrule
\end{tabular}
\caption{Wall-clock runtime per instance (seconds). All in-house methods run on the same hardware with GPT-5.2; proprietary baselines are reported under their native interfaces.}
\label{tab:runtime}
\end{table}

\begin{table}[t]
\centering
\small
\resizebox{\textwidth}{!}{
\begin{tabular}{l ll ll ll l}
\toprule
\multirow{2}{*}{\textbf{Method}} & \multicolumn{3}{c}{\textbf{MuLocBench}} & \multicolumn{3}{c}{\textbf{LocBench}} \\
\cmidrule(lr){2-4} \cmidrule(lr){5-7}
 & Tokens/inst & API calls & Cost/inst & Tokens/inst & API calls & Cost/inst \\
\midrule
Agentless & 17.8K & -- & \$0.03$^\ddagger$ & 26.3K & -- & \$0.05$^\ddagger$ \\
Codex & 521.8K & -- & \$0.98$^\ddagger$ & 591.4K & -- & \$1.10$^\ddagger$ \\
CoSIL & 18.7K & -- & \$0.04$^\ddagger$ & 25.0K & -- & \$0.05$^\ddagger$ \\
LocAgent & 711.5K & -- & \$1.26$^\ddagger$ & 940.7K & -- & \$1.67$^\ddagger$ \\
SWE-agent & 96.1K & 14.9 & \$0.12 & 105.0K & 15.3 & \$0.14 \\
OpenHands & -- & 22.9$^\dagger$ & -- & -- & 23.6$^\dagger$ & -- \\
mini-swe-agent & -- & 15.6 & \$0.09 & -- & 16.3 & \$0.10 \\
\midrule
\textbf{LARGER (Ours)} & 353K & 13.0 & \$0.44$^\ddagger$ & 452K & 14.5 & \$0.56$^\ddagger$ \\
\bottomrule
\end{tabular}
}
\caption{Token usage and cost per instance. Tokens/inst: mean total tokens (prompt + completion). BM25 is excluded (no LLM calls). The LARGER row corresponds to the fixed configuration used in \hyperref[tab:main_results]{Table~\ref*{tab:main_results}}; its token telemetry covers 1097/1100 MuLocBench instances and 374/560 LocBench instances. ``--'' indicates the framework does not expose the metric. $^\dagger$OpenHands reports agent-loop iterations, not raw API calls. $^\ddagger$Cost is a coarse list-price estimate from the available framework token telemetry and should be interpreted only as an order-of-magnitude comparison. SWE-agent and mini-swe-agent costs are recorded directly via litellm.}
\label{tab:tokens}
\end{table}

\paragraph{Token decomposition.} \hyperref[tab:cost_decomp_tokens]{Table~\ref*{tab:cost_decomp_tokens}} reports the distribution of tokens consumed per instance, broken down by category. Both \emph{Input (cache miss)} and \emph{Input (cache hit)} are prompt tokens; the difference is whether they were served from the GPT-5.2 prompt cache: cache misses are billed at the full input rate, cache hits at $\sim$10\% of that rate. \emph{Output} are generated tokens, and \emph{Reasoning} are thinking-mode tokens billed at the output rate. The total prompt size at any call is \emph{Input (cache miss)} $+$ \emph{Input (cache hit)}. We report mean, median, and the inter-quartile range (P25--P75) to characterize the distribution, since per-issue cost varies considerably with repository size and issue complexity.

\begin{table}[h]
\centering
\small
\begin{tabular}{l rrrr | rrrr}
\toprule
\multirow{2}{*}{\textbf{Token category}} & \multicolumn{4}{c|}{\textbf{MuLocBench}} & \multicolumn{4}{c}{\textbf{LocBench}} \\
\cmidrule(lr){2-5} \cmidrule(lr){6-9}
 & Mean & Median & P25 & P75 & Mean & Median & P25 & P75 \\
\midrule
Input (cache miss) &  62.8K &  46.9K &  34.7K &  66.0K & 110.6K &  95.6K &  72.2K & 132.5K \\
Input (cache hit)  & 286.1K & 190.3K &  97.7K & 367.6K & 337.2K & 231.8K & 119.1K & 431.2K \\
Output             &   4.4K &   3.2K &   1.9K &   5.2K &   4.7K &   3.5K &   2.1K &   5.9K \\
Reasoning          &   3.1K &   2.0K &   0.9K &   3.6K &   3.3K &   2.1K &   1.0K &   4.3K \\
\midrule
\textbf{Total tokens} & \textbf{353.3K} & \textbf{255.0K} & \textbf{145.7K} & \textbf{446.5K} & \textbf{452.5K} & \textbf{339.4K} & \textbf{209.3K} & \textbf{543.4K} \\
\textbf{API calls (steps)} & \textbf{13.0} & \textbf{11.0} & \textbf{7.0} & \textbf{16.0} & \textbf{14.5} & \textbf{12.0} & \textbf{8.0} & \textbf{18.0} \\
\bottomrule
\end{tabular}
\caption{Per-instance token decomposition for LARGER. \emph{Input (cache miss)} and \emph{Input (cache hit)} together form the prompt tokens at each call, distinguished by whether they were served from the GPT-5.2 prompt cache; \emph{Reasoning} tokens are thinking-mode tokens billed at the output rate. The long upper tail of \emph{Input (cache hit)} reflects multi-step trajectories on large repositories where the system prompt and accumulated context are reused across calls. API calls counts \texttt{step\_finish} events emitted by the agent runtime per instance.}
\label{tab:cost_decomp_tokens}
\end{table}

\paragraph{Cost decomposition.} \hyperref[tab:cost_decomp_cost]{Table~\ref*{tab:cost_decomp_cost}} translates the token decomposition into US dollars using GPT-5.2 list pricing: \$1.75 per 1M cache-miss input tokens, \$0.175 per 1M cache-hit input tokens (the cached-input rate), and \$14.00 per 1M output and reasoning tokens. The bottom row of the table reports the cost actually billed by the runtime (which uses the same list rates); the small residual against the list-price reconstruction reflects sub-token rounding and minor accounting differences in the runtime's cost reporter. Despite consuming roughly half a million tokens per instance, the dominant cost contributor on MuLocBench is \emph{cache-miss input} (41\% of total) rather than cache hits, because the prompt cache is partially invalidated whenever the working set of files changes between agent steps. On LocBench the share shifts further toward cache-miss input (53\%), consistent with longer search trajectories on its larger repositories.

\begin{table}[h]
\centering
\small
\begin{tabular}{l rr | rr}
\toprule
\multirow{2}{*}{\textbf{Cost component (per instance)}} & \multicolumn{2}{c|}{\textbf{MuLocBench}} & \multicolumn{2}{c}{\textbf{LocBench}} \\
\cmidrule(lr){2-3} \cmidrule(lr){4-5}
 & USD & Share & USD & Share \\
\midrule
Input (cache miss) @ \$1.75/M  & \$0.110 & 41.4\% & \$0.193 & 53.1\% \\
Input (cache hit)  @ \$0.175/M & \$0.050 & 18.9\% & \$0.059 & 16.2\% \\
Output             @ \$14.00/M & \$0.062 & 23.2\% & \$0.066 & 18.0\% \\
Reasoning          @ \$14.00/M & \$0.044 & 16.5\% & \$0.046 & 12.7\% \\
\midrule
\textbf{Total (list-price reconstruction)} & \textbf{\$0.265} & 100\% & \textbf{\$0.364} & 100\% \\
Total (runtime-reported)       & \$0.251 & --    & \$0.347 & --    \\
\bottomrule
\end{tabular}
\caption{Per-instance cost decomposition for LARGER using GPT-5.2 list pricing. Shares are relative to the list-price reconstruction. Cache-miss input dominates because each agent step expands the visible context with new graph evidence, partially invalidating the prompt cache between calls.}
\label{tab:cost_decomp_cost}
\end{table}

\section{Dataset Details}
\label{app:datasets}

\begin{table}[h]
\centering
\small
\resizebox{\columnwidth}{!}{
\begin{tabular}{lccccl}
\toprule
\textbf{Benchmark} & \textbf{Instances} & \textbf{Repos} & \textbf{Languages} & \textbf{Task Type} & \textbf{Evaluation} \\
\midrule
LocBench & 560 & 5 & Python & Localization & Acc@$k$, Recall@$k$, NDCG@$k$ \\
MuLocBench & 1,100 & 46 & Python & Multi-file localization & Acc@$k$, Recall@$k$, NDCG@$k$ \\
SWE-Atlas TW & 90 & 11 & Py / Go / TS / C & Test writing & Mutation + rubric judge \\
SWE-Atlas QA & 124 & 11 & Py / Go / TS / C & Codebase QA & Rubric judge \\
\bottomrule
\end{tabular}
}
\caption{Summary of evaluation benchmarks used in our experiments.}
\label{tab:dataset_summary}
\end{table}

\paragraph{LocBench.}
LocBench (Loc-Bench\_V1)~\citep{chen2025locagent} is a code localization benchmark derived from real-world GitHub issues across five Python repositories. Each instance consists of a natural-language problem statement and a target repository at a specific commit. The ground truth specifies the set of files, modules (classes), and functions that require modification. We evaluate at three granularities: file-level (Acc@\{1,3,5\}), module-level (Acc@\{5,10\}), and function-level (Acc@\{5,10\}), along with Recall, NDCG, Precision, and MAP at the same cutoffs. LocBench isolates the localization subtask from patch generation, making it a direct test of retrieval quality without confounding effects from code synthesis.

\paragraph{MuLocBench.}
MuLocBench~\citep{zhang2025mulocbench} is a multi-file localization benchmark containing 1,100 issues from 46 popular Python repositories. Unlike single-file benchmarks, MuLocBench specifically targets issues whose resolution requires changes across multiple files, making structural retrieval signals (imports, call chains, co-change patterns) especially important for achieving high recall. We use the same localization metrics as LocBench. The 46 repositories span web frameworks (flask, django, fastapi), ML libraries (scikit-learn, transformers, pytorch), data tools (pandas, numpy), and utilities (requests, click, rich), providing diversity in repository size (2K--500K LOC), architectural style, and application domain.

\paragraph{SWE-Atlas Test Writing.}
SWE-Atlas Test Writing (SWE-Atlas TW)~\citep{scale2026sweatlas} is a benchmark of 90 test-generation tasks across 11 real-world repositories in four languages (Python, Go, TypeScript, C). Each task provides a repository at a specific commit, an instruction describing what tests to write, and a rubric of must-have and nice-to-have requirements. Evaluation is three-phase: (1)~Docker-based execution of the generated test suite with mutation testing to verify behavioral correctness, (2)~rubric-based LLM judging of test quality and coverage, and (3)~manifest correctness verification. The repositories include paperless-ngx, scapy, minio, grafana, kitty, k6, and others. This benchmark tests whether graph-augmented retrieval helps agents discover relevant implementation files, test utilities, and fixtures needed to write comprehensive tests in unfamiliar, multi-language codebases.

\paragraph{SWE-Atlas Codebase QA.}
SWE-Atlas Codebase QA (SWE-Atlas QA)~\citep{scale2026sweatlas} is a code understanding benchmark of 124 questions across the same 11 multi-language repositories. Each question requires the agent to explore the repository and provide an evidence-based answer with specific code references (file paths, line numbers, variable names). Questions are categorized into architecture and system design (44 instances), code onboarding (28), root-cause analysis (37), security (11), and API/integration usage (4). Evaluation uses rubric-based LLM judging, scoring each answer against must-have criteria. This benchmark directly tests whether structural graph context helps agents navigate and understand codebases they have not seen before, complementing the localization-focused benchmarks with an understanding-focused evaluation.

\section{Additional Empirical Analyses}
\label{app:empirical}

This appendix collects two empirical studies that complement the main results: a runtime comparison between commit-aware alignment and full graph reconstruction (\hyperref[app:alignment]{\S\ref*{app:alignment}}), and a per-repository hyperparameter sweep characterizing how the oracle-optimal $(k, \theta)$ varies with repository size (\hyperref[app:hyperparam]{\S\ref*{app:hyperparam}}).

\subsection{Commit-Aware Alignment Cost}
\label{app:alignment}

We expand on the cost analysis sketched in \hyperref[sec:outer-loop]{Section~\ref*{sec:outer-loop}}. Constructing $G_{c_0}$ from scratch parses every file in the repository at the reference commit $c_0$ and runs the heterogeneous extraction, confidence weighting, and community partitioning passes, incurring cost $O(|V_{c_0}| + |E_{c_0}|)$ in the resulting graph size plus parser overhead per file.

Aligning $G_{c_0}$ to a target commit $c$ touches only the per-commit diff. Let $\Delta(c) = V(c) \,\triangle\, V_{c_0}$ denote the symmetric difference of file sets at $c$ and $c_0$. Alignment performs three operations: (i)~filtering $V_{c_0} \cap V(c)$ from the persistent node table; (ii)~parsing $V_\Delta(c)$ to produce the new node and edge sets $V_\Delta, E_\Delta$; and (iii)~indexing the resulting subgraph into the sidecar storage (\hyperref[sec:implementation-details]{Appendix~\ref*{sec:implementation-details}}). Each operation runs in time linear in $|\Delta(c)|$ plus the local parse output, giving alignment cost
\[
T_{\mathrm{align}}(c \mid c_0) \;=\; O\bigl(|\Delta(c)|\bigr),
\]
which is independent of the unchanged portion of the graph at $c_0$.

Communities and other global structures are recomputed lazily, only when the cumulative diff exceeds a configurable threshold or when retrieval quality begins to degrade; between recomputations they are reused as soft priors. Since typical commits modify only a small fraction of files, alignment is orders of magnitude cheaper than full reconstruction. This makes branch switching and cross-commit evaluation tractable on large repositories where full reconstruction would dominate the inference budget.

\paragraph{Empirical validation.}
To verify that the asymptotic cost gap between full reconstruction and commit-aware alignment translates into a practically large speedup, we instrumented the construction pipeline against three large Python repositories drawn from MuLocBench: \texttt{pandas}, \texttt{transformers}, and \texttt{scikit-learn}. The reference graph $G_{c_0}$ is built once at each repository's recent \texttt{HEAD} using the \emph{v1\_full} configuration (test files dropped, MAX caps at 10, base edges only) and cached on disk. We then iterate over 15 historical \texttt{base\_commit}s per repository; each commit is the buggy state pinned by a MuLocBench instance, so the commits are real bug-fix targets rather than synthetic check-points and span several years of repository history. For every commit $c$, we measure: (i)~$T_{\mathrm{rebuild}}(c)$, the wall time of \texttt{build\_graph} followed by \texttt{build\_enhanced\_index} on the worktree checked out at $c$ (this is the LocAgent-style static path); (ii)~$T_{\mathrm{align}}(c\,|\,c_0)$, the wall time of \texttt{align\_index\_to\_commit} applied to the cached $G_{c_0}$ given the same worktree; and (iii)~$|\Delta(c)|$, the size of the file set the alignment algorithm actually processes, decomposed as $|\Delta(c)| = |V(c) \setminus V_{c_0}| + |V_{c_0} \setminus V(c)|$ (\emph{missing} plus \emph{stale}). Both pipelines use identical edge-set and language-coverage flags, so the comparison isolates the cost of restarting parsing, edge resolution, and community detection from scratch.

\begin{figure}[t]
\centering
\includegraphics[width=\linewidth]{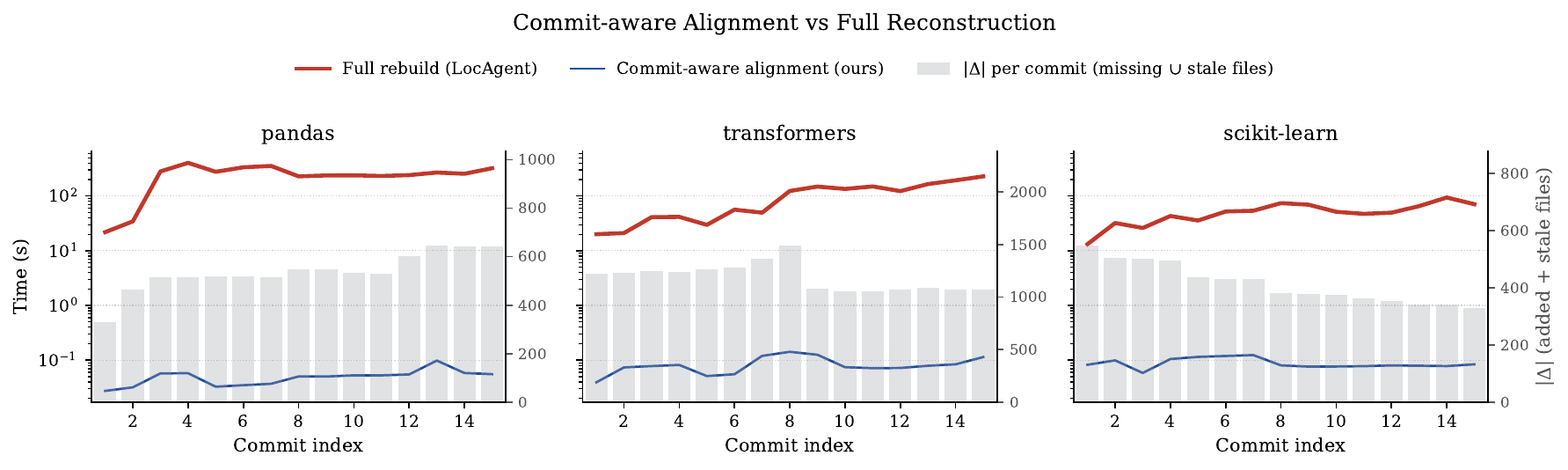}
\caption{Per-commit cost of full graph reconstruction vs.\ commit-aware alignment on MuLocBench base commits, log-scaled time axis. The red curve is the LocAgent-style static rebuild ($T_{\mathrm{rebuild}}$); the blue curve is alignment from a cached reference graph ($T_{\mathrm{align}}$); the shaded gray bars on the right axis report $|\Delta(c)|$, the file set the alignment algorithm processes at each commit (missing $\cup$ stale w.r.t.\ the cached $G_{c_0}$). Even when $|\Delta(c)|$ runs into the hundreds of files (commits drawn from years of history against a single cached reference), alignment stays in the tens-of-milliseconds range, while reconstruction grows with the absolute size of the repository at $c$. Time axis is log-scaled.}
\label{fig:alignment-efficiency}
\end{figure}

\hyperref[fig:alignment-efficiency]{Figure~\ref*{fig:alignment-efficiency}} shows the resulting trace. Aggregated across the 15 commits per repository, full reconstruction takes a median of 254\,s on \texttt{pandas} (max 402\,s), 123\,s on \texttt{transformers} (max 228\,s), and 51\,s on \texttt{scikit-learn} (max 94\,s), whereas alignment completes in median 53\,ms, 78\,ms, and 81\,ms respectively, with worst-case latencies under 150\,ms in all three repositories. The resulting per-commit speedup is $792\!-\!9528\times$ for \texttt{pandas} (median $4508\times$), $284\!-\!2292\times$ for \texttt{transformers} (median $1013\times$), and $159\!-\!1198\times$ for \texttt{scikit-learn} (median $605\times$). The asymmetry between repositories is driven primarily by absolute repository size at $c_0$: \texttt{pandas} is the largest of the three at recent \texttt{HEAD}, so its full-reconstruction baseline pays the most for the unchanged portion of the graph that alignment skips entirely.

The trace also confirms the predicted scaling regimes. Reconstruction time tracks the \emph{absolute} number of nodes and edges in $G_c$, which grows with each repository's history: the leftmost commits in the \texttt{pandas} and \texttt{transformers} panels execute in 20--35\,s because those very old commits ship a much smaller codebase, and the curves climb as the repository grows toward modern size. Alignment time is essentially flat across the same window, dependent on $|\Delta(c)|$ rather than $|V_c|$. The gray $|\Delta(c)|$ bars are large by design: we deliberately stress alignment by using \texttt{HEAD} as the single cached reference for commits drawn from across years of history, so the median $|\Delta|$ is 531 files for \texttt{pandas}, 1226 for \texttt{transformers} (dominated by 978 stale paths in $V_{c_0}\setminus V(c)$ from files that did not yet exist historically), and 383 for \texttt{scikit-learn}. Even under this regime, the slope of $T_{\mathrm{align}}$ as a function of $|\Delta(c)|$ is shallow because the dominant operation is regex-based import extraction over the missing-Python subset (\hyperref[app:alignment]{\S\ref*{app:alignment}}), and stale entries are scrubbed in time linear in the surviving entry count rather than the full graph.

The practical implication for our experimental pipeline is that cross-commit evaluation on MuLocBench (1,100 instances across 46 repositories, each pinned to a distinct \texttt{base\_commit}) is amortized by reusing one cached $G_{c_0}$ per repository rather than rebuilding 1,100 graphs. Naively rebuilding at every instance would, at the rates measured here, add a four-figure multiplier to the wall-clock cost of an evaluation sweep, enough to make repeated ablation runs prohibitive on the larger repositories. Alignment moves this overhead from the per-instance critical path to a one-time per-repository cost, leaving the instance-level budget dominated by agent inference rather than indexing.

\subsection{Hyperparameter Sensitivity}
\label{app:hyperparam}

To understand how sensitive LARGER's two retrieval-side hyperparameters (top-$k$ and the edge-confidence threshold $\theta$) are to repository scale, we sweep $(k,\theta)\in\{3,5,10,20\}\times\{0,0.5\}$ on every repository in MuLocBench and LocBench, and select the per-repository oracle pick maximizing Acc@5. \hyperref[fig:oracle_vs_size]{Figure~\ref*{fig:oracle_vs_size}} plots the oracle pick against repository size (log-scaled), with each point one repository and marker size proportional to its issue count.

\begin{figure}[t]
\centering
\includegraphics[width=\linewidth]{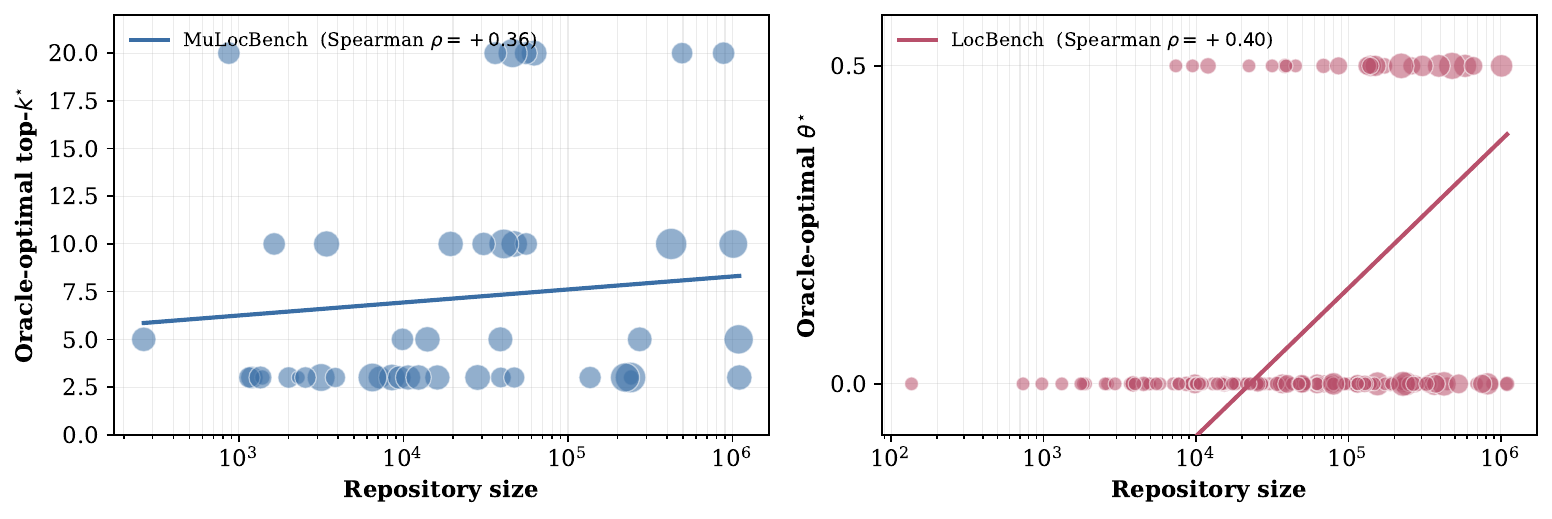}
\caption{\textbf{Oracle hyperparameters versus repository size.} Each point is one repository; marker size is proportional to its issue count. Solid lines are issue-weighted least-squares fits in $\log_{10}(\text{LOC})$. \textbf{(a)} On MuLocBench (multi-file edits), the oracle top-$k^{\star}$ trends upward with repository size (Spearman $\rho{=}{+}0.36$): larger repos benefit from wider neighborhoods. \textbf{(b)} On LocBench, the oracle confidence threshold $\theta^{\star}$ trends upward with repository size ($\rho{=}{+}0.40$): larger repos benefit from stricter filtering of low-confidence edges.}
\label{fig:oracle_vs_size}
\end{figure}

Two findings emerge. First, the oracle top-$k^{\star}$ on MuLocBench rises with repository size, indicating that multi-file fixes in larger codebases benefit from more candidate neighbors per anchor. A fixed default is a reasonable starting point but leaves headroom for a size-aware policy. Second, the oracle confidence threshold $\theta^{\star}$ on LocBench rises with repository size: in larger, denser repositories, low-confidence edges introduce more noise and stricter filtering helps. The fixed default $\theta=0.5$ thus over-filters small repos and under-filters parts of the multi-file regime, motivating learned or adaptive hyperparameter policies as future work.

\section{Implementation Details}
\label{sec:implementation-details}

\subsection{Graph Construction}
\label{app:graph-impl}

We instantiate $G_c=(V_c,E_c)$ as a typed multigraph over directories, files, classes, and functions, with edges drawn from a finite set of dependency types $\tau$. Construction proceeds in two parsing passes: a syntactic pass extracts intra-language relations (containment, import, invocation, inheritance), and a secondary pass adds cross-artifact links to tests, documentation, and configuration. After parsing, we attach two auxiliary signals consumed by the inner-loop scoring function (\hyperref[eq:scoring]{Eq.~\ref*{eq:scoring}}): an edge confidence $\omega$ and a file-level community label $\kappa$.

\paragraph{Edge confidence.}
To account for varying reliability of static analysis, we assign each semantic edge a fixed confidence score $\omega(e) = \omega_{\rho(e)} \in [0,1]$ indexed by the edge's provenance type $\rho(e)$ (\hyperref[tab:confidence]{Table~\ref*{tab:confidence}}); structural edges receive $\omega \equiv 1$. The threshold $\theta$ in \hyperref[eq:topk]{Eq.~\ref*{eq:topk}} filters out edges with $\omega(v,u) < \theta$, ensuring that local expansion prioritizes high-quality structural evidence under the agent's context budget.

\paragraph{Community partitioning.}
Issues that touch one functional region of a codebase typically require coordinated changes across files in that region, so we equip $G_c$ with a coarse cluster signal. We project the typed multigraph onto an undirected file-level graph $\widetilde G$ on file nodes $V_c^{\mathrm{file}}$, with edge weight $\widetilde w(\{u',v'\})$ counting cross-file semantic edges $\bigl\{(u,v) \in E_c^{\mathrm{sem}} : \mathrm{file}(u)=u',\,\mathrm{file}(v)=v',\,u'\neq v'\bigr\}$, and partition $\widetilde G$ into communities $\kappa : V_c^{\mathrm{file}} \to \mathbb{N}$ via the Leiden algorithm~\citep{traag2019louvain}. Each file additionally carries a cohesion score equal to the edge density of its induced subgraph. \hyperref[eq:scoring]{Eq.~\ref*{eq:scoring}} consumes $\kappa$ as a soft prior, biasing selection toward neighbors that share the anchor's cluster.

\subsection{Edge Confidence Hierarchy}
\label{app:confidence}
\hyperref[tab:confidence]{Table~\ref*{tab:confidence}} gives the full confidence hierarchy used by the scoring and gating mechanisms described in \hyperref[app:graph-impl]{Appendix~\ref*{app:graph-impl}}.

\begin{table}[h]
\centering
\small
\begin{tabular}{lclc}
\toprule
\textbf{Edge provenance} & $\omega$ & \textbf{Edge provenance} & $\omega$ \\
\midrule
Same-file co-occurrence & 1.0 & Cython implementation & 0.85 \\
Explicit import statement & 0.95 & Test linkage (\texttt{tested\_by}) & 0.75 \\
Resolved import & 0.9 & Documentation (\texttt{documents}) & 0.6 \\
Inheritance & 0.9 & Configuration (\texttt{configures}) & 0.5 \\
Fuzzy name match & 0.5 & & \\
\bottomrule
\end{tabular}
\caption{Edge confidence hierarchy. Scores reflect the reliability of the static-analysis evidence that produced each edge type.}
\label{tab:confidence}
\end{table}

\subsection{Sidecar Graph Storage}
Instead of serving graph neighborhoods through a live graph database, we materialize graph data as lightweight per-file JSON sidecar files. Each sidecar contains:
\begin{itemize}[left=0pt,nosep]
    \item Typed neighbors: dependents, dependencies, callers, callees, with edge types and confidence scores.
    \item Execution flows: process chains passing through the file.
    \item Community: Leiden community label and cohesion score.
    \item Cross-role links: related tests, documentation, and configuration files.
\end{itemize}

Neighborhoods are capped at 20 neighbors per file (10 in a compact variant) to bound sidecar size. This design keeps runtime lookup lightweight: neighborhood retrieval reduces to loading a small number of local JSON files, avoiding a graph database or MCP server into the retrieval loop.

\subsection{Prompts}
\label{app:prompts}

To isolate the contribution of graph evidence from prompt-engineering effects, the LARGER agent and the non-graph CLI baseline share an identical task framing, output schema, and step budget; the only difference is four lines of guidance that tell the agent how to interpret the structural signals appended to grep output.

The non-graph CLI agent runs reported in \hyperref[tab:main_results]{Table~\ref*{tab:main_results}} use the following baseline system prompt:

\begin{tcolorbox}[colback=gray!5,colframe=gray!50,title=Baseline agent system prompt,fontupper=\small\ttfamily]
You are a file localization agent. Given a GitHub issue (title and body), your task is to explore the repository and identify which source files need to be modified to resolve the issue.\\
\\
Instructions:\\
~~1. Read the issue carefully to understand the bug or feature request.\\
~~2. Explore the repository structure using shell commands (find, ls, grep).\\
~~3. Search for relevant code: function names, class names, error messages, or keywords mentioned in the issue.\\
~~4. Narrow down to the specific files that would need changes.\\
~~5. Output your answer as a JSON object with exactly this format:\\
~~~~~~~~\{"files\_to\_modify": ["path/to/file1.py", "path/to/file2.py"]\}\\
\\
Rules:\\
~~- You MUST list between 1 and 10 files. Aim for around 5 files in most cases.\\
~~- Think broadly: include source, test, documentation, and config files that would need changes.\\
~~- List files in order of importance; use paths relative to repo root.\\
~~- Only include files that actually exist in the repository.\\
~~- The final message MUST contain the JSON object and nothing else.
\end{tcolorbox}

The LARGER agent system prompt is identical to the baseline above, with Step 3 replaced by the following block (everything else is byte-identical):

\begin{tcolorbox}[colback=gray!5,colframe=gray!50,title=LARGER agent system prompt (Step~3 replacement),fontupper=\small\ttfamily]
~~3. Search for relevant code: function names, class names, error messages, or keywords mentioned in the issue. When grep finds matches, it also shows structurally related files from the dependency graph. Pay attention to these sections in grep output:\\
~~~~~~- [Related files from dependency graph] -{}- callers, callees, confidence scores\\
~~~~~~- Cluster labels group files by functional area\\
~~~~~~- Callers/Callees [c] -{}- confidence c in [0.5, 1.0] indicates how reliably the relationship was resolved (1.0 definite, 0.5 fuzzy)\\
~~~~~~- Flow annotations show execution paths; upstream/downstream files in a flow may also need changes\\
~~4. Use the graph context strategically:\\
~~~~~~- High-confidence callers/callees are strong candidates for related changes\\
~~~~~~- Files in the same cluster often need coordinated modifications\\
~~~~~~- Flows trace user-facing APIs to internal implementation\\
~~5. Narrow down to the specific files that would need changes.
\end{tcolorbox}

\paragraph{Graph evidence format.} The block appended to each grep result is generated deterministically from the sidecar index (\hyperref[app:graph-impl]{Section~\ref*{app:graph-impl}}); no LLM is used to produce it. A representative excerpt:

\begin{tcolorbox}[colback=gray!5,colframe=gray!50,title=Graph evidence format,fontupper=\small\ttfamily]
[Related files from dependency graph]\\
~~src/flask/blueprints.py (cluster: RoutingBlueprints):\\
~~~~Callers:~~src/flask/app.py:register\_blueprint [1.0],\\
~~~~~~~~~~~~~~src/flask/scaffold.py:\_endpoint\_from\_view\_func [0.9]\\
~~~~Callees:~~src/flask/helpers.py:url\_for [0.95]\\
~~~~Flow:~~~~~request\_dispatch (step 2/4)
\end{tcolorbox}

\paragraph{Downstream task evaluation.} For SWE-Atlas Codebase QA and Test Writing, we use the rubric-grading and execution-verification prompts distributed with the upstream Scale-AI \texttt{swe-atlas-qna} and \texttt{swe-atlas-tw} task releases without modification, so judging is identical across all methods compared in \hyperref[tab:sweatlas]{Table~\ref*{tab:sweatlas}}, and any prompt-induced variance is shared. The agent-side prompt for these tasks is the per-task instruction shipped with each Harbor task directory; LARGER and the CLI baselines receive byte-identical instructions and only differ in whether the grep channel is augmented with graph evidence.

\end{document}